\documentclass{article}

\usepackage[a4paper]{geometry}

\usepackage{amsmath}
\usepackage{amssymb}
\usepackage{amsfonts}
\usepackage{dsfont}
\usepackage[pagebackref, colorlinks=true, citecolor=blue]{hyperref}
\usepackage{color,graphicx,setspace}
\usepackage{paralist}
\setdefaultitem{\normalfont\bfseries \textendash}{}{}{}
\usepackage{todonotes}
\usepackage{authblk}

\newcommand{\MaxQP}{\textnormal{\textsc{MaxQP}}}
\newcommand{\UnitMaxQP}{\textnormal{\textsc{Unit MaxQP}}}

\newcommand{\MaxCut}{\textsc{MaxCut}}
\newcommand{\val}{\ensuremath{\mathrm{val}}}
\newcommand{\opt}{\ensuremath{\mathrm{opt}}}

\newcommand{\eps}{\varepsilon}
\DeclareMathOperator{\mmod}{mod}


\usepackage{amsthm}
\newtheorem{theorem}{Theorem}
\newtheorem{lemma}{Lemma}

\newtheorem{corollary}{Corollary}

\theoremstyle{definition}

\title{\Large \bf Approximating Sparse Quadratic Programs}

\author[1]{Danny~Hermelin\thanks{The first author's work was supported by the Israel Science Foundation (Grant no.\ 1070/20).}$ ^,$}
\author[2]{Leon~Kellerhals}
\author[2]{Rolf Niedermeier}
\author[1]{Rami~Pugatch\thanks{The fourth author's work was supported by the Israel Science Foundation (Grant no.\ 776/19).}$ ^,$}

\affil[1]{\small Technische Universit\"at Berlin, Chair of Algorithmics and Computational Complexity, Berlin,~Germany, \texttt{leon.kellerhals@tu-berlin.de, rolf.niedermeier@tu-berlin.de}}
\affil[2]{\small Ben-Gurion University of the Negev, Department of Industrial Engineering and Management, Beer~Sheva,~Israel, \texttt{hermelin@bgu.ac.il, rpugatch@bgu.ac.il}}

\date{}

\makeatletter
  \def\abstractname{Abstract.}
  \renewenvironment{abstract}{%
      \if@twocolumn
        \section*{\abstractname}%
      \else
        \small
        \quotation
	\noindent{\bfseries\abstractname}%
      \fi}
      {\if@twocolumn\else\endquotation\fi}
\makeatother

\begin{document}

\maketitle

\begin{abstract}
Given a matrix $A \in \mathbb{R}^{n\times n}$, we consider the problem of maximizing $x^TAx$ subject to the constraint $x \in  \{-1,1\}^n$. This problem, called \MaxQP{} by Charikar and Wirth~[FOCS'04], generalizes \MaxCut{} and has natural applications in data clustering and in the study of disordered magnetic phases of matter. Charikar and Wirth showed that the problem admits an $\Omega(1/\lg n)$ approximation via semidefinite programming, and Alon, Makarychev, Makarychev, and Naor~[STOC'05] showed that the same approach yields an~$\Omega(1)$ approximation when $A$ corresponds to a graph of bounded chromatic number. Both these results rely on solving the semidefinite relaxation of \MaxQP{}, whose currently best running time is $\tilde{O}(n^{1.5}\cdot \min\{N,n^{1.5}\})$, where $N$ is the number of nonzero entries in~$A$ and~$\tilde{O}$ ignores polylogarithmic factors.

In this sequel, we abandon the semidefinite approach and design purely combinatorial approximation algorithms for special cases of \textsc{MaxQP} where~$A$ is sparse (\emph{i.e.}, has $O(n)$ nonzero entries). Our algorithms are superior to the semidefinite approach in terms of running time, yet are still competitive in terms of their approximation guarantees. More specifically, we show that:
\begin{compactitem}
\item \MaxQP{} admits a $(1/2\Delta)$-approximation in $O(n \lg n)$ time, where $\Delta$ is the maximum degree of the corresponding graph.
\item \UnitMaxQP{}, where $A \in \{-1,0,1\}^{n\times n}$, admits a $(1/2d)$-approximation in $O(n)$ time when the corresponding graph is $d$-degenerate, and a $(1/3\delta)$-approximation in $O(n^{1.5})$ time when the corresponding graph has~$\delta n$ edges.
\item \MaxQP{} admits a $(1-\eps)$-approximation in $O(n)$ time when the corresponding graph and each of its minors have bounded local treewidth.
\item \UnitMaxQP{} admits a $(1-\eps)$-approximation in $O(n^2)$ time when the corresponding graph is $H$-minor free.
\end{compactitem}
\end{abstract}

\section{Introduction}

In this paper we are interested in the following (integer) quadratic problem which was coined \MaxQP{} by Charikar and Wirth~\cite{Charikarwirth04}. Given an $n \times n$ symmetric matrix with zero valued diagonal entries $A$, $a_{i,j} \in \mathbb{R}$ for all $i,j \in \{1,\ldots,n\}$, we want to maximize 
\begin{equation}
\label{eqn:maxQP}
\val_x(A)\,\,=\,\, \sum^n_{i=1}\sum^n_{j=1} \, a_{i,j}x_i x_j \quad \text{ s.t. }  
x_i \in \{-1,1\} \text{ for all } i \in \{1,\ldots,n\}.
\end{equation}
Observe that the requirement that all diagonal values of $A$ are zero is to avoid the term $\sum_i a_{i,i}$ which is constant for all choices of~$x_{i,i}$ in (\ref{eqn:maxQP}). Furthermore, a non-symmetric matrix $A$ can be replaced with an equivalent symmetric $A'$ by setting $a'_{i,j}=a'_{j,i}=\frac{1}{2} \cdot (a_{i,j}+a_{j,i})$ without changing~$\val_x(A)$, and so the requirement that~$A$ is symmetric is just for convenience's sake. 

Our interest in \MaxQP{} lies in the fact that it is a generic example of integer quadratic programming which naturally appears in different contexts. Below we review three examples: 

\begin{itemize}
\item \emph{Graph cuts:}
	Readers familiar with the standard quadratic program formulation of \MaxCut{}~\cite{GoemansWilliamson1995} will notice the similarity to (\ref{eqn:maxQP}). Indeed, given a graph $G=(V,E)$ with vertex set $V=\{1,\ldots,n\}$ and edge weights $a_{i,j} \geq 0$ for each $\{i,j\} \in E$, the corresponding \MaxQP{} instance on $-1 \cdot A$ has an optimum solution of value $2k-\sum_{i,j} a_{i,j}$ if and only if $G$ has a maximum cut of total weight $k$. Thus, \MaxQP{} with only negative $A$~entries can be used to solve \MaxCut{} exactly, implying that even this special case is NP-hard. Furthermore, this special case translates to the closely related \textsc{MaxCut Gain} problem~\cite{Charikarwirth04,KhotODonnel09}.

\item \emph{Correlation clustering:}
In correlation clustering~\cite{BansalBC04,CharikarGW05,DemaineEFI06,Swamy04}, we are provided with pairwise judgments of the similarity of $n$ data items. In the simplest version of the problem there are three possible inputs for each pair: similar (\emph{i.e.}\ positive), dissimilar (\emph{i.e.}\ negative), or no judgment. In a given clustering of the $n$ items, a pair of items is said to be in \emph{agreement} (\emph{disagreement}) if it is a positive (negative) pair within one cluster or a negative (positive) pair across two distinct clusters. In \textsc{MaxCorr}, the goal is to maximize the \emph{correlation} of the clustering; that is, the absolute difference between the number of pairs in agreement and the number of pairs in disagreement, across all clusters. Note that when only two clusters are allowed, this directly corresponds to \UnitMaxQP{}, the variant of \MaxQP{} where $a_{i,j} \in \{-1,0,1\}$ for each entry $a_{i,j}$ of~$A$.

\item \emph{Ising spin glass model:}
	Spin glass models are used to in physics to study disordered magnetic phases of matter. Such system are notoriously hard to solve, and various techniques to approximate the free energy were developed. In the Ising spin-glass model~\cite{Barahona1982,Talagrand03}, each node in the graph represents a single spin which can either point up (+1) or down (-1), and neighboring spins $(i, j)$ may have either positive or negative coupling energy $a_{i,j}$ between them. The energy of this system (when there is no external field) is given by its Hamiltonian $H= -1 \cdot \sum_{i,j} a_{i,j}\alpha(i)\alpha(j)$, where $\alpha(i) \in \{-1,1\}$ is the spin at site~$i$. A famous problem in the physics of spin-glasses is the characterization of the ground state --- the state that minimizes the energy of the system. This problem is precisely \MaxQP. 
\end{itemize}

It is convenient to view \MaxQP{} in graph-theoretic terms. Let $G=(V,E)$ be the \emph{graph associated with $A$}, where $V=\{1,\ldots,n\}$ and $E=\{\{i,j\}: a_{i,j} \neq 0 \}$. The first algorithmic result for \MaxQP{} was due to Bieche \emph{et al.}~\cite{Bieche1980} and Barahona \emph{et al.}~\cite{BarahonaMaynard1982} who studied the problem in the context of the Ising spin glass model. They showed that when~$G$ is restricted to be planar, the problem is polynomial-time solvable via a reduction to maximum-weight matching. At the same time, Barahona proved that the problem is NP-hard for three-dimensional grids~\cite{Barahona1982} or apex graphs (graphs with a vertex whose removal leaves the graph planar)~\cite{Barahona1983}. 

\subsection{Approximation Algorithms}

As \MaxQP{} is NP-hard, even for restricted instances, our focus is naturally on polynomial-time approximation algorithms. We note that the fact that the values of $A$ are allowed to be both positive and negative makes \MaxQP{} quite unique in the context of approximation and presents several challenges. First of all, there is an immediate equivalence between \MaxQP{} and the problem of minimizing~\eqref{eqn:maxQP}, as maximizing $\val_x(A)$ is the same as minimizing $\val_x(-1 \cdot A)$. Furthermore, solutions might have negative values; that is, we might have $\val_x(A) < 0$ for certain solutions $x$. This poses an extra challenge since a solution with a non-positive value is not an $f(n)$-approximate solution, for any function $f$, in case the optimum is positive (which it always is whenever $A \neq 0$, see Charikar and Wirth~\cite{Charikarwirth04} and our Lemma~\ref{lem:LowerBound}). In particular, a uniformly at random chosen solution $x$ has $\val_x(A)=0$ on expectation, and unlike \MaxCut{}, such a solution is unlikely to be useful as any kind of approximation.

Alon and Naor~\cite{AlonNaor06} were the first to show that these difficulties can be overcome by carefully rounding a semidefinite relaxation of \MaxQP{}. In particular, they studied the problem when $G$ is bipartite, and showed that using a rounding technique that relies on the famous Grothendieck inequality, one can obtain an approximation factor guarantee of~$\approx 0.56$ for the bipartite case. Later, together with Makarychev and Makarychev~\cite{AlonMMN05}, they showed that the integrality gap of the semidefinite relaxation is $O(\lg \chi(G))$ and $\Omega(\lg \omega(G))$, where~$\chi(G)$ and $\omega(G)$ are the chromatic and clique numbers of~$G$, respectively. In particular, this gap is constant for several interesting graph classes such as $d$-degenerate graphs and $H$-minor free graphs, and it generalizes the previous result of Alon and Naor~\cite{AlonNaor06} as $\chi(G) \leq 2$ when $G$ is bipartite.
\begin{theorem}[\cite{AlonMMN05,AlonNaor06}]
\label{thm:Alon}
\MaxQP{} restricted to graphs of $O(1)$ chromatic number can be approximated within a factor of $\Omega(1)$ in polynomial time. 
\end{theorem}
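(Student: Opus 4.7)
The plan is to round the semidefinite programming (SDP) relaxation of \MaxQP{} and to exploit the chromatic structure of the underlying graph~$G$ in order to reduce the problem to the bipartite case, where Grothendieck's inequality yields a constant-factor rounding in the spirit of Alon and Naor~\cite{AlonNaor06}.

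First I would write down the natural SDP relaxation of~\eqref{eqn:maxQP}: replace each $x_i \in \{-1,1\}$ by a unit vector $v_i \in \mathbb{R}^n$ and maximize $\sum_{i,j} a_{i,j}\langle v_i,v_j\rangle$. Its optimum $\val^{\text{SDP}}$ can be computed in polynomial time and upper-bounds $\opt$. Since $k := \chi(G) = O(1)$, a proper $k$-coloring $V = V_1 \cup \cdots \cup V_k$ can be found in polynomial time (by brute force, as $k$ is constant). Because no edge lies inside a color class, the SDP objective decomposes as
\begin{equation*}
\val^{\text{SDP}} \;=\; 2 \sum_{1 \le p < q \le k} S_{p,q}, \qquad \text{where } S_{p,q} := \sum_{i \in V_p,\, j \in V_q} a_{i,j}\langle v_i,v_j\rangle ,
\end{equation*}
so by averaging some pair $(p^\ast, q^\ast)$ satisfies $S_{p^\ast,q^\ast} \ge \val^{\text{SDP}} / (2 \binom{k}{2})$.

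For this pair I would invoke the algorithmic form of Grothendieck's inequality on the bipartite matrix $(a_{i,j})_{i \in V_{p^\ast},\, j \in V_{q^\ast}}$, which produces signs $\{x_i \in \{-1,1\} : i \in V_{p^\ast} \cup V_{q^\ast}\}$ with $\sum_{i \in V_{p^\ast},\, j \in V_{q^\ast}} a_{i,j}\, x_i x_j \ge S_{p^\ast,q^\ast}/K_G$, where $K_G$ is Grothendieck's constant. The signs of the remaining vertices I would set uniformly at random and independently. Every term $a_{i,j} x_i x_j$ in $\val_x(A)$ with at least one index outside $V_{p^\ast}\cup V_{q^\ast}$ contributes $0$ in expectation, while terms with both indices in the same color class vanish and terms with $i \in V_{p^\ast}, j \in V_{q^\ast}$ (or vice versa) combine to twice the Grothendieck-rounded bipartite value. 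Hence
\begin{equation*}
\mathbb{E}[\val_x(A)] \;\ge\; \frac{2\, S_{p^\ast,q^\ast}}{K_G} \;\ge\; \frac{\val^{\text{SDP}}}{K_G \binom{k}{2}} \;\ge\; \frac{\opt}{K_G \binom{k}{2}} \;=\; \Omega(1)\cdot \opt ,
\end{equation*}
and a standard application of the method of conditional expectations derandomizes this to a deterministic assignment meeting the bound.

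The main obstacle I anticipate is the contribution of edges straddling the chosen pair $V_{p^\ast}\cup V_{q^\ast}$ and the remaining vertices: once we commit to the Grothendieck rounding on the chosen pair, these edges are not being optimized, and arbitrary sign choices for the outside vertices could cancel the gain entirely. Zeroing them out via independent uniform randomization is what makes the argument go through, and this choice is compatible with derandomization because $\val_x(A)$ is linear in each individual sign, so the conditional-expectations step can be applied greedily to the random vertices one by one without disturbing the deterministic contribution on~$V_{p^\ast}\cup V_{q^\ast}$.
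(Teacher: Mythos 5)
The theorem is quoted from Alon--Naor and Alon--Makarychev--Makarychev--Naor, so there is no in-paper proof to compare against; judged on its own, your reduction-to-bipartite scheme (decompose the SDP value over pairs of color classes, pick the best pair by averaging, round it with the algorithmic Grothendieck inequality, and neutralize all other edges by independent random signs, then derandomize) is sound \emph{once a proper coloring with $O(1)$ colors is in hand}. The averaging bound $S_{p^\ast,q^\ast}\ge \val^{\text{SDP}}/(2\binom{k}{2})$, the zero-expectation of straddling edges, and the conditional-expectations finish are all fine (indeed, Lemma~\ref{lem:monotonic} of this paper would give you an even cleaner deterministic extension of the bipartite solution to all of $G$), and losing a factor $\binom{k}{2}$ instead of the $O(\lg k)$ of the cited works is irrelevant for an $\Omega(1)$ statement.

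The genuine gap is the very first algorithmic step: ``a proper $k$-coloring can be found in polynomial time (by brute force, as $k$ is constant)'' is false. Brute force over $k$-colorings takes $k^n$ time, and no polynomial-time fix is known: it is NP-hard to $4$-color a $3$-colorable graph, and coloring a graph of constant chromatic number with \emph{any} constant number of colors is not known to be in P (and is hard under standard conjectures). So on the class ``graphs of $O(1)$ chromatic number'' your algorithm is not polynomial time as described; the chromatic number is only a promise, not a certificate you can exploit. This is precisely the obstacle the cited proof of Alon, Makarychev, Makarychev, and Naor avoids: they bound the rounding loss by $O(\lg \vartheta(\bar G))$, where $\vartheta$ is the Lov\'asz theta function of the complement, which is computable via an SDP and satisfies $\vartheta(\bar G)\le \chi(G)$, so no explicit coloring is ever needed. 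Your argument does become a valid proof whenever an $O(1)$-coloring is efficiently computable --- e.g.\ for the $d$-degenerate and $H$-minor free instances actually used in this paper, where a degeneracy ordering yields such a coloring in linear time --- but to prove Theorem~\ref{thm:Alon} as stated you must either supply such a coloring argument or switch to a coloring-free rounding as in the cited work.
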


Regarding the general version of the problem, where $G$ can be an arbitrary graph, an integrality gap of $O(\lg n)$ for the semidefinite relaxation was first shown by Nesterov~\cite{Nesterov98}. However, his proof was non-constructive.
Charikar and Wirth~\cite{Charikarwirth04} made his proof constructive, and provided a rounding procedure for the relaxation that guarantees $\Omega(1/\lg n)$-approximate solutions regardless of the structure of $G$. 
\sloppy
\begin{theorem}[\cite{Charikarwirth04,Nesterov98}]
\label{thm:CharikarWirth}%
\MaxQP{} can be approximated within a factor of $\Omega(1/\lg n)$ in polynomial time. 
\end{theorem}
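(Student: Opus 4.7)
The plan is to follow the classical semidefinite programming (SDP) approach, combining Nesterov's integrality gap bound with the constructive rounding of Charikar and Wirth. I would proceed in four steps: (1) relax the problem to an SDP and solve it; (2) perform a Gaussian random projection; (3) apply a truncation step to obtain a fractional solution in $[-1,1]^n$; (4) round this fractional solution to $\{-1,1\}^n$.

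For step~(1), replace each $x_i \in \{-1,1\}$ with a unit vector $u_i \in \mathbb{R}^n$ and each product $x_iy_j$ with the inner product $\langle u_i, u_j \rangle$. The resulting program
\[
\text{maximize} \quad \sum_{i,j} a_{i,j}\langle u_i, u_j \rangle \qquad \text{subject to} \quad \|u_i\| = 1 \text{ for all } i
\]
is a standard SDP that can be solved in polynomial time, and whose optimum $\mathrm{SDP}$ upper-bounds $\opt$. For step~(2), sample $g \sim N(0, I_n)$ and form the jointly Gaussian variables $y_i = \langle g, u_i\rangle$; note that $\mathbb{E}[y_i y_j] = \langle u_i, u_j\rangle$. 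The naive choice $x_i = \mathrm{sgn}(y_i)$ yields $\mathbb{E}[x_ix_j] = (2/\pi)\arcsin(\langle u_i, u_j \rangle)$; since $A$ is indefinite, the ``arcsin versus identity'' discrepancy cannot be absorbed directly, which is exactly why a straight Goemans--Williamson argument fails here.

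The key step~(3) is Charikar and Wirth's truncation. Fix a threshold $c = \Theta(\sqrt{\lg n})$, define $T(y) = y$ for $|y| \le c$ and $T(y) = c \cdot \mathrm{sgn}(y)$ otherwise, and set $z_i = T(y_i)/c \in [-1,1]$. Then $z_i = y_i/c$ on the typical event $\{|y_i| \le c\}$, and writing $z_i = y_i/c + e_i$ the error $e_i$ is supported on the Gaussian tail. Expanding,
\[
\mathbb{E}\!\left[\sum_{i,j} a_{i,j} z_i z_j\right] = \frac{1}{c^2}\sum_{i,j} a_{i,j}\langle u_i,u_j\rangle \;+\; \text{(cross and tail terms)} = \frac{\mathrm{SDP}}{c^2} + R.
\]
The main obstacle, and the technical heart of the argument, is to show that the error $R$ is bounded in absolute value by $O(\mathrm{SDP}/c^2)$ (possibly after a constant loss). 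One bounds $R$ by splitting each Gaussian indicator into the small-tail event $\{|y_i|>c\}$ (of probability $\lesssim 1/\mathrm{poly}(n)$) and the bulk event, using Cauchy--Schwarz and the fact that $|a_{i,j}|$ summed against second-moment-like quantities is controlled by $\mathrm{SDP}$. Choosing $c = \Theta(\sqrt{\lg n})$ balances the tail contribution against the linear main term and yields $\mathbb{E}[\sum a_{i,j}z_iz_j] = \Omega(\mathrm{SDP}/\lg n)$.

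Finally, for step~(4), given the random fractional solution $z \in [-1,1]^n$, I would independently round each coordinate by setting $x_i = 1$ with probability $(1+z_i)/2$ and $x_i = -1$ otherwise. Since the $x_i$ are independent given $z$, $\mathbb{E}[x_ix_j \mid z] = z_iz_j$ for $i \ne j$, and the diagonal contribution $\sum_i a_{i,i}$ vanishes by the standing assumption that $A$ has zero diagonal. Taking total expectation over both $g$ and the rounding yields $\mathbb{E}[\val_x(A)] = \Omega(\mathrm{SDP}/\lg n) \ge \Omega(\opt/\lg n)$, so some realization achieves this bound, and the procedure can be derandomized by the method of conditional expectations. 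The main difficulty throughout is the tail-error analysis of step~(3); the remaining steps are standard once this estimate is in place.
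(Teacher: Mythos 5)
Note first that this paper does not prove Theorem~\ref{thm:CharikarWirth}: it is quoted from Nesterov and Charikar--Wirth, and the surrounding text only records that Charikar and Wirth made Nesterov's $O(\lg n)$ integrality-gap bound constructive by rounding the semidefinite relaxation. Your sketch follows exactly that route (solve the SDP, project on a Gaussian, truncate at $c=\Theta(\sqrt{\lg n})$, then round the fractional vector independently with $\Pr[x_i=1]=(1+z_i)/2$), so it matches the cited source rather than anything argued here; steps (1), (2) and (4), including the zero-diagonal remark and the derandomization comment, are fine. One caveat on the step you yourself flag as the technical heart: the truncation error is naturally bounded, via Cauchy--Schwarz against the Gaussian tail, by $||A||\cdot e^{-\Omega(c^2)}$, and $||A||$ is \emph{not} ``controlled by SDP'' in general --- it can exceed the SDP value by a factor of order $n$ (take all entries $-1$). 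The argument closes only because the tail factor is $n^{-\Theta(1)}$ for $c=\Theta(\sqrt{\lg n})$ \emph{and} because of a separate lower bound of the form $\opt \ge \Omega(||A||/n)$ (obtainable, e.g., from the standard deviation of a random $\pm1$ assignment together with Cauchy--Schwarz, or from the largest row norm), which makes the error a lower-order term compared with $\mathrm{SDP}/c^2$. With that auxiliary lemma stated and used explicitly in place of the vague ``controlled by SDP,'' your outline is the standard, correct Charikar--Wirth proof.
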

\fussy

As for the time complexity of the algorithm in Theorems~\ref{thm:Alon} and~\ref{thm:CharikarWirth} above, Arora, Hazan, and Kale~\cite{AroraHK05} provided improved running times for several semidefinite programs, including the relaxation of \MaxQP{}. They showed that this relaxation can be solved (to within any constant factor) in $\tilde{O}(n^{1.5}\cdot \min\{N,n^{1.5}\})$ time, where $N$ is the number of nonzero entries in $A$ and $\tilde{O}$ ignores polylogarithmic factors. Thus, for general matrices $A$, this running time is $O(n^3)$, and for matrices with $O(n)$ nonzero entries this is $O(n^{2.5})$. 

There has also been work on approximation lower bounds for \MaxQP{}. Alon and Naor~\cite{AlonNaor06} showed that \MaxQP{} restricted to bipartite graphs cannot be approximated within $16/17 +\eps$ unless P=NP, while Charikar and Wirth~\cite{Charikarwirth04} showed that, assuming P$\neq$NP, the problem admits no~$(11/13+\eps)$-approximation when $G$ is an arbitrary graph. Both these results follow somewhat directly from the $16/17 +\eps$ lower bound for \MaxCut{}~\cite{Haastad2001}. In contrast, Arora \emph{et al.}~\cite{AroraBKSH05} showed a much stronger lower bound by proving that there exists a constant~$c > 0$ such that \MaxQP{} cannot be approximated within~$\Omega(1/\lg^c n)$, albeit under the weaker assumption that NP $\not \subseteq$ DTime($n^{\lg^{O(1)}n})$.

\subsection{Our results}
\label{sec:our-results}

In this paper we focus on \emph{sparse} graphs, \emph{i.e.}, graphs 
where the number of edges $m$ is $O(n)$. This corresponds to matrices $A$ having $O(n)$ nonzero entries. Note that \MaxQP{} remains APX-hard in this case as well (see Theorem~\ref{thm:hardness-bipartite} in Appendix~\ref{sec:hardness}). Nevertheless, we show that one can abandon the semidefinite approach in favor of simpler, ``purely combinatorial'' algorithms, while still maintaining comparable performances. In particular, our algorithms are faster than than those obtained from the semidefinite approach whose fastest known implementation requires~$O(n^{2.5})$ time~\cite{AroraHK05}. Furthermore, most of them are quite easy to implement.

\subsubsection{Generic classes of sparse graphs}

We begin by considering three basic classes of sparse graphs. 
Our first result concerns bounded degree graphs. 
We show that a simple greedy algorithm for computing a matching in $G$ can be used to obtain an $\Omega(1)$-approximate solution to the corresponding \MaxQP{} instance. 
\begin{theorem}
\label{thm:MaxDegree}
Let $\Delta \geq 1$. \MaxQP{} restricted to graphs of maximum degree~$\Delta$ can be approximated within a factor of~$1/2\Delta$ in $O(n \lg n)$ time.
\end{theorem}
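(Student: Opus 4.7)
The plan is to reduce the problem to computing a heavy matching $M$ in the graph $G = (V, E)$ associated with $A$, where each edge $\{i,j\} \in E$ carries weight $|a_{i,j}|$. I would sort the nonzero entries of $A$ in decreasing order of $|a_{i,j}|$, scan them in that order, and add $\{i,j\}$ to $M$ whenever both endpoints are still unmatched; this is the standard greedy maximum-weight matching, costing $O(m \log m) = O(n \log n)$ time under the sparsity assumption $m = O(n)$.

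Given $M$, I assign signs as follows. For each matched edge $\{i,j\} \in M$, the two sign patterns satisfying $x_i x_j = \mathrm{sign}(a_{i,j})$ (so that $a_{i,j} x_i x_j = |a_{i,j}|$) are selected by a single fair coin; for every vertex $i$ uncovered by $M$, an independent fair coin sets $x_i$. Under this distribution the marginal of each $x_i$ is uniform on $\{-1,1\}$, and for any pair $\{i,j\} \notin M$ the variables $x_i, x_j$ depend on disjoint coins and are therefore independent, so $E[x_i x_j] = 0$. Consequently
\[
E[\val_x(A)] \;=\; 2 \sum_{\{i,j\} \in E} a_{i,j}\, E[x_i x_j] \;=\; 2 \sum_{\{i,j\} \in M} |a_{i,j}|.
\]

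To turn this expectation into an approximation guarantee I would combine the trivial upper bound $\opt \leq 2\sum_{\{i,j\}\in E} |a_{i,j}|$ with a charging argument for the greedy matching. Each $e \in E \setminus M$ was rejected because some $e' \in M$ already covered one of its endpoints at the time it was scanned, and since the scan is in decreasing order of weight we have $|a_{e'}| \geq |a_e|$. Each $e' \in M$ is incident to at most $2(\Delta-1)$ edges of $E$ other than itself, yielding $\sum_{e \in E} |a_e| \leq (2\Delta - 1)\sum_{e' \in M} |a_{e'}|$, and hence $E[\val_x(A)] \geq \opt/(2\Delta-1) \geq \opt/(2\Delta)$.

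Finally, I would derandomize by the method of conditional expectations: each of the $O(n)$ fair coins is fixed in turn to maximize the conditional expectation, at a cost proportional to the degrees involved, for a total of $O(m) = O(n)$ extra time on top of the initial sort. The main obstacle will be the charging step --- arguing cleanly that every non-matching edge can be blamed on an earlier, heavier matching edge without double-counting --- together with verifying that the pair-wise coins really do make $E[x_i x_j] = 0$ for all non-matched pairs, including the delicate case where $i$ and $j$ lie in two distinct matched edges of $M$.
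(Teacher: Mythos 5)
Your proposal is correct and follows essentially the same route as the paper: a greedy heaviest-first matching $M$, a charging argument showing $w(M) \ge w(E)/2\Delta$ (your per-edge blaming even gives the marginally sharper $2\Delta-1$), and the trivial bound $\opt \le \|A\|$. The only difference is cosmetic: where you turn $M$ into an assignment by randomized rounding plus conditional expectations, the paper does it directly and deterministically by fixing each matched pair to make its edge positive and using the sign-flip/combination lemma (Lemmas~\ref{lem:DisjointUnion} and~\ref{lem:monotonic}), which avoids the derandomization machinery but yields the same guarantee and running time.
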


Next we consider $d$-degenerate graphs. Recall that a graph is \emph{$d$-degenerate} if each of its subgraphs has a vertex of degree at most $d$. We show that by using a more elaborate structure than the matching used in the proof of Theorem~\ref{thm:MaxDegree}, one can obtain an $\Omega(1)$ approximation in this setting for the unit weight case. 
\begin{theorem}
\label{thm:Degenerate}
There is a $1/2d$ approximation algorithm for \UnitMaxQP{} restricted to $d$-degenerate graphs with $O(n)$ running time.
\end{theorem}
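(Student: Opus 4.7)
The plan is to generalize the matching-based approach of Theorem~\ref{thm:MaxDegree} by replacing the matching with a \emph{spanning star forest} derived from the degeneracy ordering of~$G$. First, in $O(n)$ time (which is linear, since $m \le dn$), compute a degeneracy elimination ordering $v_1, \dots, v_n$ of $G$ via a standard linear-time bucket algorithm; in this ordering, each $v_i$ has at most $d$ neighbors in $N^+(v_i) := \{v_j : j > i, \{v_i, v_j\} \in E\}$.

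Second, greedily build a spanning star forest $F$ of $G$ by scanning $v_1, \dots, v_n$ in forward order. When we reach an unmarked $v_i$ that has at least one unmarked neighbor in $N^-(v_i) := \{v_j : j < i, \{v_i, v_j\} \in E\}$, declare $v_i$ a star \emph{center}, set its \emph{leaves} to be all currently unmarked neighbors in $N^-(v_i)$, and mark all of them together with $v_i$; otherwise leave $v_i$ unmarked so that it can still be claimed as a leaf later. Vertices still unmarked at the end are singletons. Third, round as in the matching algorithm of Theorem~\ref{thm:MaxDegree}: for each star center $c$ pick $x_c \in \{-1,+1\}$ uniformly at random, set $x_\ell = a_{c,\ell} \cdot x_c$ for each leaf $\ell$ of $c$'s star (so that the star edge $\{c,\ell\}$ contributes $+1$ to the per-edge sum), and pick each singleton's value independently and uniformly at random; finally derandomize via conditional expectations in $O(n)$ additional time.

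Each star edge contributes $+2$ to $\val_x(A)$ by design; non-star edges whose endpoints lie in distinct stars (or involve a singleton) contribute $0$ in expectation because the corresponding random choices are independent. Hence $\mathbb{E}[\val_x(A)] = 2|F|$, provided we can establish the key structural property that \emph{no two leaves of the same star are adjacent in $G$}. Indeed, suppose $v_i, v_{i'}$ with $i < i'$ were both leaves of some center $v_j$ (so $j > i'$) and $v_i \sim v_{i'}$; then at step $i'$ both $v_{i'}$ and the earlier neighbor $v_i$ would be unmarked (each becomes a leaf of $v_j$ only at step $j > i'$), so the algorithm would make $v_{i'}$ a center at step $i'$ claiming $v_i$ as a leaf, contradicting that $v_{i'}$ is itself a leaf of $v_j$. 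A similar argument shows that every singleton has no later-neighbor, so $\sum_{v \text{ non-singleton}}|N^+(v)| = m$ and there are at least $m/d$ non-singletons; since every center has at least one leaf, at most half of the non-singletons are centers, yielding $|F| \ge m/(2d)$. Combining this with the trivial upper bound $\opt \le 2m$ gives $\val_x(A) \ge m/d \ge \opt/(2d)$, as required.

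The core difficulty---and the main obstacle I anticipate---is pinning down the structural claim that leaves of the same star are non-adjacent; without it, non-star edges between leaves of a common star would contribute a deterministic $\pm 1$ that could spoil the expected-value bound. Once this claim is in hand, the $(1/2d)$-approximation follows mechanically from the randomized sign-flipping analysis (essentially the same as in the matching-based proof of Theorem~\ref{thm:MaxDegree}) together with the elementary counting $|F| \ge m/(2d)$ sketched above.
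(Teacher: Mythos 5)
Your proof is correct, but it takes a genuinely different route from the paper. You build a star forest directly from a degeneracy ordering (each vertex having at most $d$ later neighbors), prove the key structural facts that leaves of a common star are pairwise non-adjacent and that singletons have no later neighbors, and conclude that the forest contains at least $m/(2d)$ edges; satisfying these edges and zeroing out the cross edges (in expectation, then derandomizing, which is just an iterated application of Lemma~\ref{lem:DisjointUnion}) gives $\val_x(G) \ge m/(2d)$, which you compare against the trivial bound $\opt(G) \le ||A|| = m$. The paper instead computes a maximal matching, upgrades it to an ``easy packing'' that also absorbs good triangles (Lemmas~\ref{lem:easy-pack-val} and~\ref{lem:max-easy-pack}), and then proves the nontrivial upper bound $\opt(G) \le d \cdot |V_{\mathcal F^*}|$ (Lemma~\ref{lem:opt-vs-unpacked}) by counting edge-disjoint bad triangles forced by the unpacked vertices. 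So the paper invests in bounding $\opt$ by a quantity tied to the computed packing, while you invest in a lower bound on the algorithm's value relative to $m$; your absolute guarantee $\val_x(G) \ge m/(2d)$ is in fact at least as strong as the paper's packing-relative guarantee in the worst case and avoids all the triangle case analysis, whereas the paper's packing (by also exploiting good triangles and certifying against a potentially much smaller upper bound on $\opt$) can produce better solutions on concrete instances. Both arguments give the worst-case ratio $1/(2d)$ in linear time for fixed $d$; only minor polish is needed on your side, e.g., the mild notational mixing of $\val_x(A)$ (which double-counts edges, so $\opt \le 2m$ in that convention) with the per-edge count $\val_x(G)$, and a sentence confirming that the conditional-expectation derandomization runs in $O(n+m)=O(n)$ time.
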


Finally, we consider \emph{$\delta$-dense} graphs, graphs which have at most $\delta n$ edges for some fixed $\delta$. By a slight modification of the idea used to prove both theorems above, we obtain the following generalization of Theorem~\ref{thm:Degenerate} at the cost of a slight increase in the running time and decrease in the approximation factor guarantee. 
\begin{theorem}
\label{thm:Density}
There is a $1/3\delta$ approximation algorithm for \UnitMaxQP{} restricted to $\delta$-dense graphs without isolated vertices that runs in $O(n^{1.5})$ time.
\end{theorem}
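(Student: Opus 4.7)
The plan is to adapt the matching-based strategy from Theorems~\ref{thm:MaxDegree} and~\ref{thm:Degenerate}, extending it so that the (large) independent set of vertices left outside a maximum matching also contributes positively to $\val_x(A)$. A matching alone cannot suffice here, as already witnessed by the star $K_{1,n-1}$, where $\nu(G)=1$ yet $\opt=2(n-1)$. The slight modification consists in ``attaching'' every unmatched vertex to a chosen neighbor in $V(M)$ and paying an additional factor in the analysis to cover the interaction between these attached edges and the matching.

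First I would compute a maximum matching $M$ of $G$ by Hopcroft--Karp in $O(m\sqrt{n}) = O(n^{1.5})$ time, which is valid since $m \leq \delta n = O(n)$. Because $M$ is maximum, the set $U := V\setminus V(M)$ of unmatched vertices is independent; because $G$ has no isolated vertices, every $w\in U$ has at least one neighbor in $V(M)$, and I would fix one such \emph{parent} $p(w)\in V(M)\cap N(w)$ for each $w\in U$. I would then consider the randomized sign assignment obtained by drawing an independent fair coin $r_{uv}\in\{-1,+1\}$ for each matching edge $\{u,v\}\in M$, setting $x_u := r_{uv}$ and $x_v := r_{uv}\,a_{uv}$, and for each $w\in U$ setting $x_w := a_{w,p(w)}\,x_{p(w)}$. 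By construction, every matching edge contributes deterministically $+2$ and every parent edge $\{w,p(w)\}$ contributes deterministically $+2$, for a guaranteed gain of $2|M| + 2(n-2|M|) = 2(n-|M|)$.

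A case analysis on the remaining edges shows that all of them are controlled by two independent coin flips and thus contribute $0$ in expectation, with one exception: a \emph{chord} edge of the form $\{w,q\}$, where $q$ is the matching partner of $p(w)$. Such a chord contributes a deterministic $\pm 2$ with sign $\sigma_w := a_{wq}\cdot a_{w,p(w)}\cdot a_{p(w),q}$. The core of the argument is to bound the negative chord contribution: I would pick $p(w)$, whenever possible, so that the partner of $p(w)$ is not adjacent to $w$ at all (no chord arises); when every neighbor of $w$ in $V(M)$ has its partner also in $N(w)$, I would choose $p(w)$ to ensure $\sigma_w = +1$ if some such choice exists, and as a last resort perform a local swap in $M$ (replacing $\{p(w),q\}$ by $\{w,p(w)\}$) at no cost to $|M|$.

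Combining these ingredients and derandomizing via conditional expectations in $O(n)$ time should give $\val_x(A) \geq \tfrac{2}{3}\cdot 2(n-|M|) \geq \tfrac{2n}{3}$ (the factor $\tfrac{2}{3}$ absorbing the residual chord loss), and since $\opt \leq 2m \leq 2\delta n$ this yields approximation ratio $(2n/3)/(2\delta n) = 1/(3\delta)$, as required. The main obstacle is precisely the chord bookkeeping: in bounded-degree and degenerate graphs, the local structure is too sparse to produce the ``paired-neighborhood'' obstruction, so the simpler analyses of Theorems~\ref{thm:MaxDegree} and~\ref{thm:Degenerate} go through; in $\delta$-dense graphs, however, many unmatched vertices may have neighborhoods that form entire matching pairs, and one must show via swaps and amortization that the aggregate chord loss never erases more than a third of the $2(n-|M|)$ positive contribution.
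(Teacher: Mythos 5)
Your overall route is the same as the paper's (compute a maximum matching, attach the unmatched vertices to matched neighbors, and compare the guaranteed value against $2\delta n \ge 2m \ge \opt(G)$), but the one step you explicitly defer --- the ``chord bookkeeping'' --- is the entire content of the proof, and the amortized bound you propose for it is false as stated. Take a single bad triangle on $\{u,v,w\}$ with $a_{u,v}=a_{u,w}=1$ and $a_{v,w}=-1$: a maximum matching is one edge, say $\{u,v\}$, the unmatched vertex $w$ is adjacent to both endpoints, every choice of parent yields $\sigma_w=-1$, and your ``last resort'' swap replaces $\{u,v\}$ by $\{w,u\}$ only to leave $v$ in the identical symmetric situation. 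Your scheme then collects $2+2-2=2$, while $\tfrac23\cdot 2(n-|M|)=\tfrac23\cdot 4=\tfrac83>2$, so the chord loss eats half, not at most a third, of $2(n-|M|)$, and the chain $\val_x \ge \tfrac23\cdot 2(n-|M|)\ge \tfrac{2n}{3}$ breaks at its first link. So as written the proposal has a genuine gap: the claim that ``swaps and amortization'' bound the aggregate chord loss by a third of $2(n-|M|)$ is neither proved nor provable.

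The final bound $\val_x \ge \tfrac{2n}{3}$ (in your doubled counting) can still be rescued, but the missing ingredient is precisely the augmenting-path consequence of maximality that the paper isolates as Lemma~\ref{lem:StarPack}: if an unmatched vertex $w$ is adjacent to \emph{both} endpoints of a matching edge $\{u,v\}$, then no other unmatched vertex can be adjacent to $u$ or to $v$, since otherwise $\{u,v\}$ could be replaced by two edges, enlarging $M$. Hence the number $B$ of unmatched vertices forced into a bad chord satisfies $B \le \min\{|M|,\,n-2|M|\}$, and with $U=n-2|M|$ the correct accounting $2|M|+2(U-B)\ge \tfrac{2n}{3}$ follows from $3B \le |M|+2U$ --- which is, in essence, the paper's count $3\,m(\mathcal{F}^*)\ge n \ge m/\delta$ in Lemma~\ref{lem:LowerBound}, where vertices closing bad triangles are simply left out of the star packing rather than attached and charged. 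Two smaller points: Hopcroft--Karp only handles bipartite graphs, so for general $G$ you need Micali--Vazirani to get a maximum matching in $O(m\sqrt n)=O(n^{1.5})$ time, as the paper does; and the randomization plus conditional expectations is unnecessary --- the deterministic flip arguments of Lemmas~\ref{lem:DisjointUnion} and~\ref{lem:monotonic} already guarantee that the edges outside the packed structure contribute nonnegatively.
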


Observe that all the three results above improve on the running time of Theorem~\ref{thm:Alon}. Furthermore, while Theorem~\ref{thm:Alon} provides an $\Omega(1)$ approximation for graphs of bounded degree and bounded degeneracy, this is not true for graphs of bounded density. For example, consider a graph consisting of a clique of size $\sqrt{n}$ together with a perfect matching on the remaining vertices. The result of Alon \emph{et al.}~\cite{AlonMMN05} implies that the semidefinite relaxation has an integrality gap of $O(\lg n)$ on such a graph, while the algorithm in Theorem~\ref{thm:Density} provides an $\Omega(1)$ approximation.

\subsubsection{\texorpdfstring{$H$}{H}-minor free graphs}

We next consider graph classes that exclude certain minors. 
A graph~$G$ is~$H$-minor free for a fixed graph~$H$ if one cannot obtain in~$G$ an isomorphic copy of~$H$ by a series of edge contractions, edge deletions, and vertex deletions. We begin by considering the class of apex-minor free graphs (recall that a graph is \emph{apex} if it contains one specific vertex whose deletion results in a planar graph). This graph class is better known as the class of minor-closed graphs with bounded local treewidth~\cite{Eppstein00}, and includes well-studied classes such as planar and bounded genus graphs. We tailor the approach of Eppstein~\cite{Eppstein00} and Grohe~\cite{Grohe03} for designing approximation algorithms for apex-minor-free graphs to our setting to obtain the following result.
\begin{theorem}
\label{thm:bdltw}
Let~$\eps > 0$. There is an~$O(n)$ time~$(1-\eps)$-approximation algorithm for \MaxQP{} restricted to apex-minor free graphs.
\end{theorem}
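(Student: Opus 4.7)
The plan is to adapt the Baker-style layering technique of Eppstein~\cite{Eppstein00} and Grohe~\cite{Grohe03} for apex-minor free graphs. Fix $k = \Theta(1/\eps)$ (to be chosen later) and perform a BFS from an arbitrary root to obtain layers $L_0, L_1, \ldots$. For each shift $i \in \{0, \ldots, k-1\}$, let $L(i) = \bigcup_j L_{jk+i}$ and $V_i = V \setminus L(i)$. Every connected component of $G[V_i]$ lies within a block of $k$ consecutive BFS-layers and so has BFS-radius at most $k-1$; since $G$ is apex-minor free and thus has linearly bounded local treewidth, each such component has treewidth $O(k)$.

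For each shift $i$, I would solve \MaxQP{} on $G[V_i]$ exactly via the standard $O(n \cdot 2^{O(k)})$-time dynamic programming on a width-$O(k)$ tree decomposition, yielding an assignment $y^{(i)}\colon V_i \to \{-1,+1\}$ of value $\opt(A|_{V_i})$. I would then extend $y^{(i)}$ to $V$ by processing each $v \in L(i)$ in arbitrary order and setting $y^{(i)}_v$ to the sign that maximizes $y^{(i)}_v \cdot \sum_u a_{uv} y^{(i)}_u$ over $v$'s already-assigned neighbors. Because this maximum is nonnegative, each greedy step only increases the running total of $\val_{y^{(i)}}(A)$, so the final assignment satisfies $\val_{y^{(i)}}(A) \geq \opt(A|_{V_i})$. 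The algorithm returns the assignment of largest value across the $k$ shifts.

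To bound the approximation ratio, let $x^*$ be optimal for $A$. A BFS edge crosses at most one layer, so its endpoints lie in at most two distinct residue classes modulo $k$; consequently each edge $e \in E$ is incident to at most two sets $L(i)$. Summing over shifts therefore yields
\begin{align*}
\sum_{i=0}^{k-1} \opt(A|_{V_i}) \,\geq\, \sum_{i=0}^{k-1} \val_{x^*|_{V_i}}(A|_{V_i})
&= k \cdot \opt(A) - \sum_{i=0}^{k-1} \sum_{e=\{u,v\} : e \cap L(i) \neq \emptyset} a_{uv} x^*_u x^*_v \\
&\geq k \cdot \opt(A) - 2 \sum_{e \in E} |a_e|.
\end{align*}
I would then invoke the lower bound $\opt(A) \geq \Omega(\sum_{e \in E} |a_e|)$---which holds on apex-minor free graphs thanks to their $O(1)$-degeneracy, via a weighted analog of the argument underlying Theorem~\ref{thm:Degenerate} and Lemma~\ref{lem:LowerBound}---to conclude that $\max_i \opt(A|_{V_i}) \geq (1 - 2c/k)\cdot \opt(A)$ for an absolute constant $c$. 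Choosing $k = \lceil 2c/\eps \rceil$ then delivers the $(1-\eps)$-approximation.

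The overall runtime decomposes into BFS ($O(n)$), tree-decomposition computation for each $G[V_i]$ (linear in $n$ for fixed $k$ via Bodlaender's algorithm), the treewidth DP ($O(n \cdot 2^{O(k)})$), and the greedy extension ($O(n)$), repeated $k$ times---all $O(n)$ once $\eps$ is fixed. The principal obstacle I expect is establishing the required lower bound $\opt(A) \geq \Omega(\sum_{e} |a_e|)$ for the weighted \MaxQP{} on $O(1)$-degenerate graphs, which involves generalizing the unit-weight matching/greedy argument of Theorem~\ref{thm:Degenerate} to arbitrary real edge weights while preserving the constant factor.
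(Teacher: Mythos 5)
Your layering, the exact bounded-treewidth solve on each $G[V_i]$, and the greedy nonnegative extension all match the paper's scheme, but your approximation analysis has a genuine gap at its crux. You discard every edge incident to the deleted residue class $L(i)$ and charge the loss against $2\sum_e|a_e| = 2||A||$, so to finish you must know that $\opt(A) \geq \Omega(||A||)$ on apex-minor free graphs with \emph{arbitrary real weights}. This bound is exactly what is missing: the paper proves such ``norm'' lower bounds only for bounded maximum degree (Lemma~\ref{lem:matching-bound}, factor $1/2\Delta$, which is useless here since apex-minor free graphs have unbounded degree) and for \emph{unit weights} via the star packing (Lemma~\ref{lem:LowerBound}); the degenerate-graph result (Theorem~\ref{thm:Degenerate}) is likewise stated and proved only for \UnitMaxQP{}, its bad-triangle counting does not carry real weights, and the conclusion of the paper explicitly leaves the weighted bounded-degeneracy case open. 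So the ``weighted analog of Theorem~\ref{thm:Degenerate} and Lemma~\ref{lem:LowerBound}'' you invoke is not available, and bounded degeneracy alone cannot give a degeneracy-independent constant anyway (weighted instances on $K_{d,D}$ already force the ratio $\opt/||A||$ down to $\Theta(1/\sqrt{d})$). You flag this yourself as the principal obstacle; as written, the proof is incomplete precisely there.

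The paper's proof of Theorem~\ref{thm:bdltw} avoids any such lower bound, and this is the idea you are missing. Instead of charging the deleted edges to $||A||$, it collects them into the graphs $H_i = G[N[\mathcal{L}_i]]$, so that the edges of $G$ are partitioned between $G_i$ and $H_i$ and hence $\opt(G_i) + \opt(H_i) \geq \opt(G)$. The key step is then to bound $\sum_i \opt(H_i) \leq 4\,\opt(G)$: graphs $H_{i_1}, H_{i_2}$ with $|i_1 - i_2| \geq 4$ are vertex-disjoint with no edges between them, so the $H_i$ split into four groups each inducing a subgraph of $G$, and the monotonicity property of Lemma~\ref{lem:monotonic} (any solution of an induced subgraph extends to $G$ without losing value, because $\opt \geq 0$ on the rest) gives $\opt(G) \geq \sum_{i \equiv j (\mmod 4)} \opt(H_i)$ for each group $j$. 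This yields $\max_i \opt(G_i) \geq (1 - 4/k)\opt(G)$ for arbitrary weights, with $k = \lceil 4/\eps\rceil$. Your edge-charging strategy is essentially what the paper is forced to do for general $H$-minor free graphs (Theorem~\ref{thm:h-minor-free}), where the Demaine et al.\ partition gives no control over crossing edges --- and that is exactly why that theorem is restricted to \UnitMaxQP{}, where Lemma~\ref{lem:LowerBound} applies. (A smaller point: your claim that a component spanning $k$ consecutive BFS layers ``has BFS-radius at most $k-1$'' is not literally true in the induced subgraph; the standard argument contracts the inner layers to a single vertex and uses that all \emph{minors} of an apex-minor free graph have bounded local treewidth, which is how the paper justifies the treewidth bound for both $G_i$ and $H_i$.)
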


We then show that Theorem~\ref{thm:bdltw} can be extended to general $H$-minor free graphs with unit weights, although this requires an extra $O(n)$ factor in the running time. This algorithm is obtained by using a partitioning algorithm of Demaine \emph{et al.}~\cite{DemaineHK05} which is based on the graph minor decomposition of Robertson and Seymour~\cite{RobertsonSeymour03a}. 
\begin{theorem}
\label{thm:h-minor-free}
For~$\eps > 0$ and any graph~$H$ there is an $O(n^2)$ time~$(1-\eps)$-approximation algorithm for \UnitMaxQP{} restricted to~$H$-minor free graphs.
\end{theorem}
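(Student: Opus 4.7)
The plan is to reduce \UnitMaxQP{} on $H$-minor free graphs to the bounded-treewidth case, in the same ``locate-and-delete'' spirit as Theorem~\ref{thm:bdltw}, but with the graph minor decomposition of Robertson and Seymour (via Demaine, Hajiaghayi, and Kawarabayashi~\cite{DemaineHK05}) replacing the bounded local treewidth property. Concretely, for every fixed $H$ and every integer $k \geq 1$, their theorem yields a constant $c = c(H,k)$ and an $O(n^2)$-time algorithm that partitions $V(G)$ into classes $V_0, \ldots, V_k$ such that the subgraph $G - V_j$ has treewidth at most $c$ for each $j$.

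I would instantiate this with $k$ chosen as a sufficiently large constant multiple of $1/\eps$, compute the partition, and select the class $V_j$ minimizing the number $|E_j|$ of edges incident to $V_j$. Since every edge touches at most two of the classes and $H$-minor free graphs satisfy $|E(G)| = O(n)$, a pigeonhole argument yields $|E_j| \leq 2|E(G)|/(k+1) = O(n/k)$. Next I would compute a width-$c$ tree decomposition of $G' = G - V_j$ in linear time and solve \UnitMaxQP{} on the induced submatrix $A'$ exactly by a standard $2^{O(c)} \cdot n$ dynamic program over the decomposition. Finally I would extend the computed optimum $y \in \{-1,+1\}^{V \setminus V_j}$ to a full solution $\tilde x$ by fixing $\tilde x_v = +1$ for every $v \in V_j$, and return $\tilde x$.

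For the analysis, let $x^*$ denote an optimal solution. In the unit case each edge contributes a value in $\{-2, 0, +2\}$ to $\val_\cdot(A)$ (the factor of two coming from the symmetry of $A$), so zeroing out the entries indexed by $E_j$ perturbs the objective by at most $2|E_j|$ for any fixed assignment. Combining this with the optimality of $y$ on $A'$ yields the chain
\[
  \val_{\tilde x}(A) \,\geq\, \val_y(A') - 2|E_j| \,\geq\, \val_{x^*}(A') - 2|E_j| \,\geq\, \val_{x^*}(A) - 4|E_j| \,=\, \opt - O(n/k).
\]
To convert this additive slack into a multiplicative $(1-\eps)$-approximation it suffices to show $\opt = \Omega(n)$, which I would derive from Lemma~\ref{lem:LowerBound} combined with the $O(1)$-degeneracy (hence bounded chromatic number and $\Omega(n)$ nonzero entries) of $H$-minor free graphs; picking $k = \Theta(1/\eps)$ then makes $4|E_j| \leq \eps \cdot \opt$.

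The main obstacle I anticipate is exactly this $\Omega(n)$ lower bound on $\opt$: without the unit-weight restriction, a small number of heavy edges could concentrate inside $E_j$ and dominate the optimum, defeating the additive-to-multiplicative conversion, which is precisely why the theorem is stated for \UnitMaxQP{} rather than general \MaxQP{}. The total running time is dominated by the $O(n^2)$ DHK partitioning step, with the linear-time tree-decomposition DP adding only $O(n)$ at the price of a constant depending on $H$ and $\eps$ through $c$, matching the claimed $O(n^2)$ bound.
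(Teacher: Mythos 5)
Your overall route is the paper's: the Demaine--Hajiaghayi--Kawarabayashi partition (Theorem~\ref{thm:MinorFreePartition}, computable in $O(n^2)$ time via the graph minor decomposition), exact solution on the bounded-treewidth remainder via Lemma~\ref{lem:boundedTW}, and the density-based lower bound on $\opt$ to turn an additive loss of $O(m/k)$ into a multiplicative $(1-\eps)$ factor. The only real structural difference is cosmetic: you pick the single class $V_j$ with the fewest incident edges by pigeonhole and solve one treewidth-bounded instance, whereas the paper solves all $k$ instances $G_i = G - E_i$ (keeping $G[V_i]$ as well, via Lemma~\ref{lem:DisjointUnion}) and returns the best, arguing by averaging that one of them is within $(1-\eps)$; both variants work and have the same asymptotic cost, since the $O(n^2)$ partition step dominates.

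There is, however, one step that is wrong as written: your justification of the additive-to-multiplicative conversion claims that $H$-minor free graphs have $\Omega(n)$ nonzero entries and hence $\opt = \Omega(n)$. Neither holds in general --- an $H$-minor free graph may have far fewer than $n$ edges (e.g.\ a few edges plus isolated vertices), in which case $\opt$ can be $o(n)$. Fortunately you do not need $\opt=\Omega(n)$: your own pigeonhole bound gives $|E_j| \le 2m/(k+1)$, i.e.\ slack $O(m/k)$ rather than $O(n/k)$, and what you need is $\opt = \Omega(m)$. This is exactly what the paper uses: since $m \le hn$ for some constant $h=h(H)$, the star-packing bound of Lemma~\ref{lem:LowerBound} (after discarding isolated vertices, which affect nothing) yields $\opt(G) \ge m/3h$ (Lemma~\ref{lem:MinorFree}), so choosing $k = \Theta(h/\eps)$ gives $4|E_j| \le \eps\cdot\opt$. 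With that substitution --- and noting that this is also precisely the point where unit weights are needed, as you correctly anticipate --- your argument is complete; as a small polish, extending the partial solution via Lemma~\ref{lem:monotonic} instead of fixing $+1$ on $V_j$ even avoids the first $2|E_j|$ loss.
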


\subsubsection{Maximum correlation}

Finally, we note that our results have direct consequences for the \textsc{MaxCorr} problem: Charikar and Wirth~\cite{Charikarwirth04} proved that an $\alpha$-approximation algorithm for \MaxQP{} implies an $\alpha/(2 + \alpha)$-approximation algorithm for \textsc{MaxCorr}. Combining this with the results discussed above gives us the following: 
\begin{corollary}
\textnormal{\textsc{MaxCorr}} can be approximated within a factor of 
\begin{compactitem}
	\item $1/(4d+1)$ on $d$-degenerate graphs in $O(n)$ time;
	\item $1/3-\eps$ on $H$-minor free graphs in~$O(n^2)$ time;
	\item $1/3-\eps$ on apex-minor free graphs in~$O(n)$ time.
\end{compactitem}
\end{corollary}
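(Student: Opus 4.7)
The plan is to combine the Charikar--Wirth reduction cited in the excerpt with our earlier approximation results in a black-box manner: any $\alpha$-approximation for \MaxQP{} converts into an $\alpha/(2+\alpha)$-approximation for \textsc{MaxCorr}. The reduction preserves the underlying graph of non-zero similarity/dissimilarity judgments, since a \textsc{MaxCorr} instance on $n$ items with judgments $a_{i,j}\in\{-1,0,1\}$ is already a \UnitMaxQP{} instance on the matrix $A=(a_{i,j})$ supported on the same graph. Consequently, if that graph is $d$-degenerate, $H$-minor free, or apex-minor free, the \UnitMaxQP{} (resp.\ \MaxQP{}) instance produced by the reduction inherits the property, and Theorems~\ref{thm:Degenerate}, \ref{thm:h-minor-free}, and \ref{thm:bdltw} apply with their stated running times.

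For the first bullet I would plug $\alpha = 1/(2d)$ from Theorem~\ref{thm:Degenerate} into $\alpha/(2+\alpha)$, obtaining
\[
\frac{1/(2d)}{2 + 1/(2d)} \;=\; \frac{1}{4d+1},
\]
with the $O(n)$ running time inherited from Theorem~\ref{thm:Degenerate}. For the remaining two bullets, Theorems~\ref{thm:h-minor-free} and~\ref{thm:bdltw} give, for every $\eps'>0$, a $(1-\eps')$-approximation, which the reduction converts into a ratio of $(1-\eps')/(3-\eps')$. A brief calculation,
\[
\frac{1-\eps'}{3-\eps'} \;=\; \frac{1}{3} - \frac{2\eps'}{3(3-\eps')},
\]
shows that for any target $\eps>0$ it suffices to choose $\eps' \leq 9\eps/(2+3\eps)$ (e.g.\ $\eps'=\eps$ for small $\eps$) to guarantee a $(1/3-\eps)$-approximation, while the asymptotic running times $O(n^2)$ and $O(n)$ are unaffected.

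No conceptual obstacle arises: the corollary is essentially arithmetic on top of the quoted black-box reduction together with the stated preservation of the graph class. The only mildly delicate point is the reparameterization of $\eps$ in the PTAS-type bullets, which is just a constant-factor adjustment of the accuracy parameter handed to Theorems~\ref{thm:h-minor-free} and~\ref{thm:bdltw}.
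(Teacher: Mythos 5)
Your proposal is correct and matches the paper's (implicit) argument: the corollary is obtained exactly by feeding Theorems~\ref{thm:Degenerate}, \ref{thm:h-minor-free}, and~\ref{thm:bdltw} into the Charikar--Wirth $\alpha/(2+\alpha)$ reduction, whose input matrix is the judgment matrix itself, so the graph class and running times are preserved. Your arithmetic, including the rescaling of $\eps$ in the two $(1-\eps)$-approximation bullets, checks out.
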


\section{Preliminaries}
\label{sec:preliminaries}

Throughout the paper we use $G=(V,E)$ to denote the graph associated with our input matrix~$A$; that is, $V=\{1,\ldots,n\}$ and $E=\{\{i,j\} : a_{i,j} \neq 0\}$. Thus, $n=|V|$ and we let $m = |E|$. We slightly abuse notation by allowing a solution $x$ to denote either a vector in $\{-1,1\}^n$ indexed by $V$ or a function $x:V \to \{-1,1\}$. For a solution $x$, we let $\val_x(G)= \sum_{\{i,j\} \in E} a_{i,j}x_ix_j$, and we let $\opt(G)=\max_x \val_x(G)$. We use $||A||$ to denote the sum of absolute values in $A$, \emph{i.e.}, $||A||=\sum_{i,j} |a_{i,j}|$. Note that $\opt(G) \leq ||A||$. 

We use standard graph-theoretic terminology when dealing with the graph $G$, as in \emph{e.g.}~Die\-stel~\cite{Diestel10}. In particular, for a subset $V' \subseteq V$, we let $G[V']$ denote the subgraph of~$G$ \emph{induced} by $V'$; \emph{i.e.}, the subgraph with vertex set $V'$ and edge set $\{\{u,v\} \in E: u,v \in V'\}$. We let~$G-V' = G[V\setminus V']$, and for a subset of edges $E' \subseteq E$ we let $G-E'$ denote the graph $(V,E')$ without isolated vertices. For a pair of disjoint subsets $V_1,V_2 \subseteq V$, we let $E(V_1,V_2)= \{\{u,v\} \in E: u \in V_1, v\in V_2\}$. Finally, we use $N(v)=\{u : \{u,v\} \in E\}$ to denote the \emph{neighborhood} of a vertex $v \in V$.


\subsection{Useful observations}

Note that for a uniformly chosen at random solution $x$, the value $a_{i,j}x_ix_j$ is zero in expectation for any edge $\{i,j\} \in E$. This implies that $\opt(G) \geq 0$. Moreover, a solution $x$ with $\val_x(G) \geq 0$ can be computed in linear time:
\begin{lemma}
\label{lem:ZeroValSol}%
One can compute in $O(m+n)$ time a solution $x$ with $\val_x(G) \geq 0$. 
\end{lemma}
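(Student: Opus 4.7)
My plan is to use a standard one-pass greedy assignment, processing vertices in an arbitrary order and making each choice locally optimal against the already-assigned neighbors. Fix any ordering $v_1, \ldots, v_n$ of $V$ (we may take the natural ordering $1, \ldots, n$). For $k = 1, \ldots, n$ in turn, set
\[
x_{v_k} \;=\; \argmax_{s \in \{-1,1\}} \; s \cdot \sum_{\substack{j < k \\ \{v_j,v_k\} \in E}} a_{v_j,v_k}\, x_{v_j},
\]
breaking ties arbitrarily. In words, having fixed the $\pm 1$ values on $v_1, \ldots, v_{k-1}$, the new variable $x_{v_k}$ is chosen so that the edges from $v_k$ to the already-processed part of the graph contribute a nonnegative amount.

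The correctness proof is a single telescoping argument. Define the increment
\[
C_k \;=\; \sum_{\substack{j < k \\ \{v_j,v_k\} \in E}} a_{v_j,v_k}\, x_{v_j} x_{v_k},
\]
so that each edge $\{v_j, v_k\}$ with $j<k$ is charged exactly once (to the later endpoint). Hence $\val_x(G) = \sum_{k=1}^{n} C_k$. Because the two choices $s = +1$ and $s = -1$ in the rule above produce values of opposite sign whose maximum is therefore nonnegative, the greedy guarantees $C_k \geq 0$ for every $k$. Summing over $k$ yields $\val_x(G) \geq 0$, as required.

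The running-time analysis is the standard one: when we process $v_k$ we inspect only the already-assigned neighbors of $v_k$ in $G$, and each edge $\{v_j, v_k\}$ is inspected at most once (at the time the later endpoint is processed). So the total work spent on computing the argmaxes is $O(m)$, plus $O(n)$ overhead for iterating through the vertices (and initializing the output vector). No data structures beyond the adjacency list of $G$ and the vector $x$ are needed, and neither sign patterns nor magnitudes of the $a_{i,j}$ cause any obstacle, which is what makes this lemma the easy base case that later arguments build on.
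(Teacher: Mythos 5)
Your proposal is correct and is essentially the paper's argument: the paper also processes vertices $1,\ldots,n$ and fixes each $x_i$ so that the edges to earlier-indexed vertices contribute nonnegatively, then sums these per-vertex contributions to get $\val_x(G) \ge 0$ in $O(m+n)$ time. The only cosmetic difference is that the paper phrases the choice as possibly flipping the sign of an arbitrary initial assignment rather than as a direct argmax.
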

\begin{proof}
For each vertex $i \in V$, let $E(i) = \{\{i,j\} \in E : j < i\}$. Consider an arbitrary initial solution $x$, and let $z_i=\sum_{\{i,j\} \in E(i)} x_ix_ja_{i,j}$. Then $z_i$ is the contribution of edges in~$E(i)$ to $\val_x(G)$. We compute a solution $x^*$ by scanning the vertices from $1$ to $n$. For a given vertex $i$, we check whether~$z_i < 0$. If so, we set $x^*_i = -x_i$, and otherwise we set $x^*_i=x_i$. Note that $z^*_i=\sum_{\{i,j\} \in E(i)}x^*_i x^*_j a_{i,j}$ must now be positive. As the value of $x^*_i$ does not change $z^*_j$ for any $j < i$, when we finish our scan we have $z^*_i \geq 0$ for each $i \in \{1,\ldots,n\}$. Thus, $\val_{x^*}(G) = \sum_i z^*_i \geq 0$. 
\end{proof}

\begin{lemma}
\label{lem:DisjointUnion}%
Let $V_1, V_2 \subseteq V$ be two disjoint subsets of vertices, and let $x^{(1)}$ and~$x^{(2)}$ be two solutions for $G[V_1]$ and $G[V_2]$ of value $z_1$ and $z_2$ respectively. Then at least one of the solutions $x^{(1)} \cup x^{(2)}$ and $-x^{(1)} \cup x^{(2)}$ has value $z_1+z_2$ for $G[V_1 \cup V_2]$.
\end{lemma}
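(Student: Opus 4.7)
The plan is to decompose the value $\val_{x}(G[V_1 \cup V_2])$ according to the three types of edges: those with both endpoints in $V_1$, those with both endpoints in $V_2$, and the cross edges in $E(V_1, V_2)$. The first two contributions are invariant under negating $x^{(1)}$, because a product of two flipped entries $(-x^{(1)}_i)(-x^{(1)}_j)$ equals $x^{(1)}_i x^{(1)}_j$. Only the cross-edge contribution changes sign when we flip $x^{(1)}$, and this is the crux of the argument.

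Concretely, I would write
\[
\val_{x^{(1)} \cup x^{(2)}}(G[V_1 \cup V_2]) \;=\; z_1 + z_2 + C,
\qquad
\val_{-x^{(1)} \cup x^{(2)}}(G[V_1 \cup V_2]) \;=\; z_1 + z_2 - C,
\]
where
\[
C \;=\; \sum_{\{i,j\} \in E(V_1,V_2)} a_{i,j}\, x^{(1)}_i \, x^{(2)}_j,
\]
with the convention that $i \in V_1$ and $j \in V_2$. The first two terms on each right-hand side come from edges inside $V_1$ (which contribute $z_1$ in both cases, by the sign-invariance noted above) and edges inside $V_2$ (which contribute $z_2$, since $x^{(2)}$ is unchanged).

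Since exactly one of $C$ and $-C$ is nonnegative, at least one of the two candidate solutions attains value at least $z_1 + z_2$, as claimed. There is no real obstacle here; the only thing to be careful about is writing the cross-edge sum correctly so that every edge is counted once and the bookkeeping of indices in $V_1$ versus $V_2$ is unambiguous. The lemma will then be applied later by subroutines that solve $G[V_1]$ and $G[V_2]$ independently and glue the partial solutions into a solution on $G[V_1 \cup V_2]$ with no loss compared to the sum of the two sub-values.
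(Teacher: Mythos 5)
Your proof is correct and follows essentially the same approach as the paper: both arguments observe that the contributions of edges inside $V_1$ and inside $V_2$ are unchanged by negating $x^{(1)}$, while the total cross-edge contribution $C$ flips sign, so one of the two glued solutions has value at least $z_1+z_2$. Your version just makes the algebra explicit where the paper argues in words.
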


\begin{proof}
Suppose $x^{(1)} \cup x^{(2)}$ has value less than $z_1+z_2$. This means that the total contribution of the edges in $E(V_1,V_2)$ is negative in this solution. Observe that in~$-x^{(1)} \cup x^{(2)}$ each edge of~$E(V_1,V_2)$ with negative contribution under $x^{(1)} \cup x^{(2)}$ now has positive contribution, and vice versa. The lemma thus follows. 
\end{proof}

Combining Lemma~\ref{lem:ZeroValSol} and Lemma~\ref{lem:DisjointUnion} above, we get an important property of $\val(G)$, namely that it is monotone with respect to induced subgraphs. 

\begin{lemma}
\label{lem:monotonic}
Let $H$ be an induced subgraph of $G$. Then given a solution $x$, one can compute in~$O(n+m)$ time a solution $x^*$ for $G$ with $\val_{x^*}(G) \geq \val_x(H)$.
\end{lemma}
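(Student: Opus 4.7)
The plan is to combine the two earlier lemmas in the natural way. Let $V_1 \subseteq V$ be such that $H = G[V_1]$, and let $V_2 = V \setminus V_1$, so that $G[V_2]$ is again an induced subgraph of $G$ and $V_1, V_2$ partition $V$.

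First I would apply Lemma~\ref{lem:ZeroValSol} to the induced subgraph $G[V_2]$ to obtain, in $O(n+m)$ time, a solution $x^{(2)}$ for $G[V_2]$ with $\val_{x^{(2)}}(G[V_2]) \geq 0$. Now I have two solutions for disjoint induced subgraphs: the given $x$ on $V_1$ with value $\val_x(H)$ and $x^{(2)}$ on $V_2$ with nonnegative value.

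Next I would invoke Lemma~\ref{lem:DisjointUnion} with $x^{(1)} := x$ and $x^{(2)}$ as above. The lemma guarantees that at least one of $x \cup x^{(2)}$ and $-x \cup x^{(2)}$ achieves value $\val_x(H) + \val_{x^{(2)}}(G[V_2]) \geq \val_x(H)$ on $G[V_1 \cup V_2] = G$. I simply compute $\val$ of both candidates explicitly and return the better one as $x^*$; each evaluation costs $O(n+m)$, so the overall running time remains $O(n+m)$.

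There is no real obstacle here — the work is essentially bookkeeping to check that (i) $G[V_2]$ really is an induced subgraph so Lemma~\ref{lem:ZeroValSol} applies, and (ii) the extra edges in $E(V_1,V_2)$, which do not contribute to either $\val_x(H)$ or $\val_{x^{(2)}}(G[V_2])$, are handled correctly by the sign-flipping trick of Lemma~\ref{lem:DisjointUnion}. Both points are immediate from the definitions, so the proof reduces to one or two lines once the setup is in place.
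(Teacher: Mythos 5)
Your proposal is correct and follows exactly the paper's argument: apply Lemma~\ref{lem:ZeroValSol} to the induced subgraph on the vertices outside $H$ to get a nonnegative-value solution, then use Lemma~\ref{lem:DisjointUnion} to combine it with $x$ (up to a global sign flip on one side) and return the better of the two candidates. The running time accounting also matches the paper's.
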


\begin{proof}
Let $V_0 \subseteq V$ be the vertices of $G$ which are not present in $H$. According to Lemma~\ref{lem:ZeroValSol} we can compute a solution $x^{(0)}$ for $G[V_0]$ with value at least zero in linear time. According to Lemma~\ref{lem:DisjointUnion} either $x^{(0)} \cup x$ or $x^{(0)} \cup -x$ has value at least $\val_{x^{(0)}}(G[V_0]) + \val_x(H) \geq \val_x(H)$. Thus, taking $x^*$ to be the solution with higher value out of $x^{(0)} \cup x$ or $x^{(0)} \cup -x$ proves the lemma. 
\end{proof}


\section{Matching-Based Algorithms}
\label{sec:MatchingBased}%

In this section we present approximation algorithms for  \MaxQP{} using certain type of matchings that we compute for $G$. In particular, we provide proofs for Theorems~\ref{thm:MaxDegree}, \ref{thm:Degenerate}, and \ref{thm:Density}. The proof of each theorem is given in each subsection below.


\subsection{Graphs of Bounded Degree}
\label{sec:MaxDegree}%

We begin with the case where $G$ has bounded maximum degree~$\Delta$. Our algorithm for this case is quite simple: It greedily computes a matching and then outputs a solution corresponding to this matching. For an edge set~$E' \subseteq E$, we let~$w(E') = \sum_{\{i,j\} \in E'} |a_{i,j}|$ be the total absolute value of the edges in $E'$. To prove that our strategy gives a good approximation on the optimal value of the \MaxQP{} instance, we first observe that, given a matching $M$, we can derive a solution with value at least $w(M)$.
\begin{lemma}
\label{lem:matching-solution}
Given a matching~$M$, one can compute in $O(n)$ time a solution~$x$ with~$\val_x(G) \ge w(M)$.
\end{lemma}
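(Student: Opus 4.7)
The plan is to produce $x$ in two stages: first, a base assignment that realizes the matching contribution $w(M)$ exactly, and second, a sign-flipping pass that forces the non-matching edges to contribute at least $0$ in total.

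For the base assignment I would fix each matching edge $\{i,j\} \in M$ by setting $x_i = 1$ and letting $x_j$ be $+1$ or $-1$ according to the sign of $a_{i,j}$, so that $a_{i,j} x_i x_j = |a_{i,j}|$. Since $M$ is a matching, these assignments do not conflict, and together they account for $w(M)$. Unmatched vertices can be initialized arbitrarily.

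The key observation for the second stage is that negating the values at \emph{both} endpoints of a matching edge $\{i,j\}$ simultaneously preserves $x_i x_j$, and hence preserves the contribution of that edge. I would therefore treat each matched pair as a single \emph{super-vertex} with one residual sign degree of freedom, and each unmatched vertex as a super-vertex of its own. On these super-vertices I would run a scan in the spirit of the proof of Lemma~\ref{lem:ZeroValSol}: order them as $p_1,\dots,p_t$; at step $k$, compute the contribution $z_k$ of the non-matching edges joining $p_k$ to $p_1,\dots,p_{k-1}$ under the current assignment, and if $z_k < 0$, flip the sign of $p_k$ (negating $x$ at both of its vertices if $p_k$ is a pair). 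A flip negates $z_k$ without touching any edge counted at an earlier step or any matching edge, so after every super-vertex has been processed each $z_k$ is non-negative. Their sum equals the total contribution of the non-matching edges, because every such edge is counted exactly once as a backward edge. Adding the preserved matching contribution of $w(M)$ then gives $\val_x(G) \ge w(M)$.

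For the running time, each non-matching edge is visited exactly once (when its later super-vertex is processed) and each matching edge only during initialization, so the total work is $O(n + m)$, which is $O(n)$ under the bounded-degree assumption of this section. The main implementation step is the bookkeeping that associates each vertex with its super-vertex and exposes backward edges incrementally; correctness itself falls out cleanly from the fact that matched-pair flips preserve matching contributions, and one only has to observe that no non-matching edge lies entirely inside a single super-vertex (the only edge between the two endpoints of a matched pair is the matching edge itself).
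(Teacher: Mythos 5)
Your proof is correct and follows essentially the same route as the paper's: fix each matched pair so that its edge contributes $|a_{i,j}|$, and then exploit the remaining per-pair sign freedom (negating both endpoints preserves the matching contribution) to force the non-matching edges to contribute nonnegatively in total. The paper packages this as an induction over the matching edges via Lemma~\ref{lem:DisjointUnion}, handling unmatched vertices afterwards through Lemma~\ref{lem:monotonic}, while you unroll it into a single Lemma~\ref{lem:ZeroValSol}-style scan over contracted super-vertices; the underlying argument and the running-time accounting are the same.
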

\begin{proof}
Let $V_M=\{u \in V: \{u,v\} \in M\}$ be the set of vertices matched by $M$ in $G$. By Lemma~\ref{lem:monotonic} it suffices to compute a solution~$x$ for the set of endpoints~$V_M$ in~$M$ such that~$\val_x(G[V_M]) \ge w(M)$. We construct~$x$ by induction on~$t = |M|$.	For~$t=1$, let~$\{u, v\} \in M$. If~$a_{u, v} > 0$, we choose~$x_u$ and~$x_v$ to be of equal value, and otherwise we choose~$x_u$ and~$x_v$ to be of opposite value.	Then~$\val_x(G[\{u, v\}]) = |a_{i,j}| = w(M)$.	Suppose now that~$t > 1$ and let~$M_0 = M \setminus \{e_t\}$. By induction, we have a solution~$x^{(0)}$ for $V_{M_0}$ with~$\val_{x^{(0)}}(G[V_{M_0}]) \ge w(M_0)$. Let~$e_t = \{u, v\}$ and let~$x^{(t)} \colon \{u, v\} \to \{-1, 1\}$ such that~$\val_{x^{(t)}}(G[\{u, v\}]) = w(e_t)$ as in the case of~$t=1$.
Then, by Lemma~\ref{lem:DisjointUnion}, either $x^{(0)} \cup x^{(t)}$ or~$x^{(0)} \cup -x^{(t)}$ has value at least~$w(M_0) + w(e_t) = w(M)$, and we are done.
\end{proof}

For our algorithm we choose a particular matching~$M^*$ which we obtain greedily as follows: We first sort the edges $\{u,v\} \in E$ in non-increasing order according to their absolute values $|a_{u,v}|$. We then iteratively choose the first edge among the remaining edges, and remove all edges that share an endpoint with this edge, and stop when no more edges are left. The matching $M^*$ is the set of all edges selected in this process.

\begin{lemma}
\label{lem:matching-bound}
$w(M^*) \ge 1/2\Delta \cdot w(E)$.
\end{lemma}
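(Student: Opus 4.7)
The plan is a standard charging argument against the greedy matching. For each edge $e = \{u,v\}$ selected into $M^*$, let $R(e) \subseteq E$ denote the set of edges removed from consideration at the step when $e$ is chosen, i.e. $e$ itself together with every other edge that is incident to $u$ or $v$ and has not yet been removed in a previous step. By construction of the greedy procedure the sets $\{R(e) : e \in M^*\}$ form a partition of $E$, so
\[
w(E) \;=\; \sum_{e \in M^*} w(R(e)).
\]
The strategy is to bound each $w(R(e))$ individually by something of the form $c \cdot |a_e|$ and then sum.

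For a fixed $e = \{u,v\} \in M^*$, I would first bound $|R(e)|$. Since $G$ has maximum degree at most $\Delta$, vertex $u$ has at most $\Delta$ incident edges and vertex $v$ has at most $\Delta$ incident edges; together these cover every edge of $R(e)$, and $e$ itself is double-counted, so $|R(e)| \le 2\Delta - 1 \le 2\Delta$. Next I would bound the weight of each individual edge in $R(e)$: at the moment the greedy procedure selected $e$, every edge of $R(e)$ was still present, and $e$ was chosen as the one of largest absolute value among these remaining edges. Consequently $|a_f| \le |a_e|$ for every $f \in R(e)$, and therefore $w(R(e)) \le 2\Delta \cdot |a_e|$.

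Combining these two facts gives
\[
w(E) \;=\; \sum_{e \in M^*} w(R(e)) \;\le\; \sum_{e \in M^*} 2\Delta \cdot |a_e| \;=\; 2\Delta \cdot w(M^*),
\]
which is exactly the claimed inequality $w(M^*) \ge \tfrac{1}{2\Delta} \cdot w(E)$.

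There is no real obstacle here; the only thing to be careful about is to correctly identify that $R(e)$ is defined using the edges that are still present when $e$ is picked (so that the greedy choice justifies $|a_f| \le |a_e|$) and simultaneously that the $R(e)$'s partition $E$ (which follows because the greedy stopping condition ensures every edge of $E$ is eventually removed by exactly one step of the procedure). Both properties are immediate from the description of the algorithm, so the proof is essentially the two-line counting argument above.
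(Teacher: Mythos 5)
Your proof is correct and is essentially the same argument as the paper's: both partition $E$ into the sets of edges removed at each greedy step, bound each set's size by $2\Delta$ via the degree bound, and use the fact that the chosen edge is the heaviest in its set before summing. The only cosmetic difference is that you phrase the per-set bound as $w(R(e)) \le 2\Delta\,|a_e|$ while the paper writes it as the averaging inequality $w(\{e_i\}) \ge w(E_i)/|E_i|$.
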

\begin{proof}
Let~$M^* = \{e_1, \dots, e_t\}$, with~$w(e_1) \ge w(e_2) \ge \dots \ge w(e_t)$, and let~$E_i$ be the set of edges that were removed after choosing~$e_i$. Formally, $E_i = \{e \in E \colon e \cap e_i \ne \emptyset\} \setminus \bigcup_{j=1}^{i-1} E_j$.	Let~$e_i = \{u_i, v_i\}$. As all edges in~$E_i$ are incident to either~$u_i$ or~$v_i$, we have that~$|E_i|$ is at most the sum of the degrees of~$u_i$ and~$v_i$, which is at most~$2\Delta$.	As~$e_i$ is the edge with the highest weight within~$E_i$ (otherwise we would have chosen a different edge for~$e_i$), we have
\[
w(\{e_i\}) \ge \frac{w(E_i)}{|E_i|} \ge \frac{w(E_i)}{2\Delta}. 
\]
Since~$E_1, \dots, E_t$ is a partition of all edges in~$G$, we get
\[
w(M) = \sum_{i=1}^k w(\{e_i\}) \ge \sum_{i=1}^k \frac{w(E_i)}{2\Delta} = \frac{w(E)}{2\Delta}.\qedhere
\]
\end{proof}

\begin{proof}[Proof of Theorem~\ref{thm:MaxDegree}]
The matching $M^*$ can be computed in $O(n \lg n)$ time. Due to Lemma~\ref{lem:matching-solution} and Lemma~\ref{lem:matching-bound}, we obtain from $M^*$ a solution $x$ with 
\[
\val_x(G) \ge 1/2\Delta \cdot w(E) = 1/2\Delta \cdot ||A||
\] 
in $O(n)$ time.	Since~$\opt(G) \le ||A||$, this solution is~$1/2\Delta$-approximate.
\end{proof}

\subsection{Graphs of Bounded Degeneracy}
\label{sec:Degen}

We next present our approximation algorithm for \UnitMaxQP{} restricted to $d$-degenerate graphs. Our algorithm extends the algorithm of the previous subsection by considering a structure more elaborate than a matching.

Let us begin with introducing some terminology. Let $x$ be a solution for $G$. If for some edge~$e= \{u,v\}$ we have $a_{u,v}x_ux_v > 0$, then we say that $e$ is \emph{good} in $x$, otherwise we say that it is \emph{bad} in $x$. Let~$\{u, v, w\} \subseteq V$ be the vertices of a triangle in~$G$. Note that there exists a solution~$x$ for $G$ where all all edges in this triangle are good iff~$a_{u,v} \cdot a_{v, w} \cdot a_{w, u} = 1$. Hence, we call a triangle \emph{good} if~$a_{u,v} \cdot a_{v, w} \cdot a_{w, u} = 1$, and \emph{bad} otherwise. 

We next introduce a special type of subgraph that will be exploited by our algorithm. Recall that a graph is a \emph{split graph} if its vertex set can be partitioned into a clique and an independent set. We call an $n$-vertex split graph \emph{basic} if it is connected, and the class of vertices inducing a clique is of size 2 (note that this clique need not be maximal). A subgraph $G'$ of $G$ is \emph{easy} if it is a basic split graph without any bad triangles. In this way, an easy subgraph of $G$ contains one \emph{center edge} whose endpoints are adjacent to all remaining \emph{outside vertices}, and the outside vertices have no edges between them. Note that, possibly, an outside vertex can form a triangle with the center edge. In this case, the triangle must be good. An \emph{easy packing} of~$G$ is a family~$\mathcal F = \{F_1, \dots F_t\}$ of pairwise disjoint subsets of vertices such that each~$F_i$ induces an easy subgraph in~$G$. For each $F_i \in \mathcal{F}$, we use $m(F_i)$ to denote the number of edges in $G[F_i]$. Furthermore, we let~$m(\mathcal F)= \sum_i m(F_i)$ and~$V_\mathcal F = \bigcup_i F_i$.

\begin{lemma}
\label{lem:easy-pack-val}
Given an easy packing~$\mathcal F$, one can compute in $O(n)$ time a solution~$x$ with~$\val_x(G) \ge m(\mathcal F)$.
\end{lemma}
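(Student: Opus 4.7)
\medskip

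\noindent\textbf{Plan.} My approach mirrors the structure used for matchings in Section~\ref{sec:MatchingBased}: (i)~construct a ``perfect'' solution on each easy subgraph~$F_i$ in isolation, (ii)~glue these solutions together via Lemma~\ref{lem:DisjointUnion}, and (iii)~extend to all of~$G$ via Lemma~\ref{lem:monotonic}.

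For step~(i), I would show that for each $F_i \in \mathcal F$ one can build in $O(|F_i|)$ time a solution~$x^{(i)}$ on~$F_i$ under which \emph{every} edge of~$G[F_i]$ is good, so that $\val_{x^{(i)}}(G[F_i]) = m(F_i)$. Let~$\{u,v\}$ be the center edge of~$F_i$. I set~$x^{(i)}_u = 1$ and~$x^{(i)}_v = a_{u,v}$, which makes the center edge good. For each outside vertex~$w$ I pick~$x^{(i)}_w$ based on an adjacency to~$u$ or~$v$ so that the corresponding edge becomes good; since $F_i$ is connected, at least one such edge exists. The crucial point is that if~$w$ is adjacent to \emph{both} endpoints of the center edge (i.e.\ forms a triangle with it), then setting $x^{(i)}_w = a_{u,w}$ also makes~$\{v,w\}$ good, because
\[
a_{v,w}\, x^{(i)}_v\, x^{(i)}_w \;=\; a_{v,w}\cdot a_{u,v}\cdot a_{u,w} \;=\; 1,
\]
where the last equality uses that the triangle is good (by definition of an easy subgraph) together with $a_{u,v}^2=1$, which holds because we are in the unit-weight setting. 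As the outside vertices form an independent set in~$G[F_i]$, this argument covers every edge of~$G[F_i]$.

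For step~(ii), I would iteratively apply Lemma~\ref{lem:DisjointUnion} to glue $x^{(1)},\dots,x^{(t)}$ into a single solution $x_\mathcal F$ on $V_\mathcal F$ with $\val_{x_\mathcal F}(G[V_\mathcal F]) \geq \sum_i m(F_i) = m(\mathcal F)$. For step~(iii), Lemma~\ref{lem:monotonic} lifts $x_\mathcal F$ to a solution~$x$ on all of~$G$ with $\val_x(G) \geq m(\mathcal F)$, as required.

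For the running time, step~(i) costs $O(\sum_i |F_i|) = O(n)$; step~(ii) inspects each edge running between two different parts $F_i, F_j$ at most once across all merges, contributing $O(m)$ in total; step~(iii) costs $O(n+m)$. Under the sparsity assumption $m = O(n)$ adopted throughout the paper, the total is $O(n)$, matching the claim. The main subtlety lies entirely in step~(i): the ``no bad triangle'' clause of the easy-subgraph definition is precisely what reconciles the three $\pm 1$ constraints imposed by the center edge and the two legs of any triangle, and it is for this reason that the lemma is restricted to \UnitMaxQP{} rather than stated for arbitrary weights.
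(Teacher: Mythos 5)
Your proposal is correct and follows essentially the same route as the paper: make every edge of each easy subgraph good by fixing the center edge and propagating signs to the outside vertices (with the good-triangle condition resolving vertices adjacent to both center endpoints), then glue the parts via Lemma~\ref{lem:DisjointUnion} and extend to all of $G$ via Lemma~\ref{lem:monotonic}. Your explicit computation $a_{v,w}a_{u,v}a_{u,w}=1$ is just a more algebraic phrasing of the paper's ``two edges of a good triangle are good, hence so is the third,'' and your running-time accounting matches the paper's (which likewise relies on $m=O(n)$ in the sparse/degenerate setting).
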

\begin{proof}
By Lemma~\ref{lem:monotonic}, it suffices to compute a solution~$x$ for~$V_\mathcal F$ of value at least~$m(\mathcal F)$. Let~$\mathcal F = \{F_1, \dots, F_t\}$. We construct such a solution by induction on~$t$.	For~$t = 1$, let~$\{u, v\}$ be the center edge of the easy subgraph~$G[F_1]$.	We assign to~$u$ an arbitrary value. To each neighbor~$w$ of~$u$ we assign the same value if~$a_{u, v} = 1$ and the opposite value otherwise. Afterwards, we proceed analogously with the neighbors of~$v$. This ensures that each edge incident to~$u$ or~$v$ contributes~$+1$ to~$\val_x(G[F_1])$.	As each triangle in~$G[F_1]$ is good and for each such triangle, two edges contribute~$+1$ to the value, the third edge must contribute~$+1$ as well. Thus, $\val_x(G[F_1]) = m(F_1) = m(\mathcal F)$. Suppose then that~$t > 1$ and let~$\mathcal F_0 = \mathcal F \setminus \{F_t\}$. By induction, we have a solution~$x^{(0)}$ for~$V_{\mathcal F_0}$ with~$\val_{x^{(0)}}(G[V_{\mathcal F_0}]) \ge m(\mathcal F_0)$. Let~$x^{(t)} \colon F_t \to \{1, -1\}$ be such that~$\val_{x^{(t)}}(G[F_t]) \ge |E(G[F_t])|$ as in the case of~$t = 1$.	Then, by Lemma~\ref{lem:DisjointUnion}, either~$x^{(0)} \cup x^{(t)}$ or~$x^{(0)} \cup -x^{(t)}$ have value at least~$m(\mathcal F_0) + |E(G[F_t])| = m(\mathcal F)$, and we are done.
\end{proof}

For our algorithm we need a particular easy packing which is  inclusion-wise maximal. Moreover, we want to be able to compute this packing efficiently so as not to exceed the linear running time promised in Theorem~\ref{thm:Degenerate}. We next describe a procedure that achieves just that.

\begin{quote}
\vspace{.5em}
\noindent \textsc{Algorithm EasyPack}:
\begin{compactenum}
\item Compute an inclusion-wise maximal matching~$M$ in $G$, and let~$I$ be the set of unmatched vertices. 
\item Let~$M^* = \emptyset$ and $I^*=I$.
\item For each~$\{x, y\} \in M$, if there are two vertices~$u, v \in I^*$ such that both~$u, x, y$ and~$v, x, y$ induce a triangle in~$G$, then 
\begin{compactitem}
    \item $M^*= M^* \cup \{\{u,x\},\{v,y\}\}$.
    \item $I^* = I^* \setminus \{u,v\}$.
\end{compactitem}
Otherwise, $M^*= M^* \cup \{\{x,y\}\}$.
\item Initialize the easy packing with the matching $M^*$, that is, $\mathcal F^* = \{\{x, y\} : \{x, y\} \in M^*\}$.
\item For each~$v \in I^*$, if there is an edge~$\{x, y\} \in M^*$ such that~$v, x, y$ induce a path or a good triangle, then remove $v$ from $I^*$ and add it to the vertex set~$F \in \mathcal F^*$ that contains~$x$ and~$y$.
\end{compactenum}
\end{quote}

\begin{lemma}
\label{lem:max-easy-pack}%
The algorithm above computes in $O(n)$ time an easy packing~$\mathcal F^*$ with~$m(\mathcal F^*) \ge |V_{\mathcal F^*}|/2$.
\end{lemma}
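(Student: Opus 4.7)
The plan is to split the statement into three tasks: verifying that $\mathcal{F}^*$ really is an easy packing, establishing the counting bound $m(\mathcal{F}^*) \ge |V_{\mathcal{F}^*}|/2$, and analysing the running time. I expect the correctness of $\mathcal{F}^*$ to require the most care, while the counting and timing arguments should be routine.

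For the correctness part, I would lean on the single structural fact that $I$ is an independent set of $G$: otherwise any edge inside $G[I]$ could be added to $M$, contradicting the maximality of the initial matching. Since every outside vertex added in step 5 comes from $I^* \subseteq I$, this immediately supplies the independent-set side of each basic split graph. The clique side is simply the center edge, which lies in $E$ by construction. Connectivity of each $G[F]$ follows because step 5 only attaches outside vertices that form a path or triangle with the center edge, and the "no bad triangle" condition is built directly into step 5 as well. The subtle point is step 3, in which a matching edge $\{x,y\}$ is replaced by two new edges $\{u,x\}$ and $\{v,y\}$: I would verify separately that (a) $M^*$ remains a matching, using $u,v \in I^* \subseteq I$ so that no other edge of $M^*$ touches $u$ or $v$; (b) both resulting vertex sets still induce basic split graphs, where the good-triangle condition for the center edge $\{u,x\}$ comes for free from the fact that $\{u,x,y\}$ was selected in step 3 together with a subsequent check in step 5; and (c) $x$ and $y$ are never re-introduced as outside vertices elsewhere, since step 5 only scans $I^*$.

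The counting bound is then a one-line calculation: each $F \in \mathcal{F}^*$ contains its two center-edge endpoints plus $|F|-2$ outside vertices, each attached to the center by at least one edge, so $m(F) \ge 1 + (|F|-2) = |F|-1$. Summing yields $m(\mathcal{F}^*) \ge |V_{\mathcal{F}^*}| - |\mathcal{F}^*|$, and since $|F| \ge 2$ for every $F$ we have $|\mathcal{F}^*| \le |V_{\mathcal{F}^*}|/2$, which combines to the desired inequality.

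For the running time I would exploit that a $d$-degenerate graph has at most $dn$ edges, so $m = O(n)$ with $d$ treated as a constant. A maximal matching $M$ can then be produced in $O(n+m)$ time by a single greedy scan. Step 3 iterates over each $\{x,y\}\in M$ and, using a temporary bit array marking $N(y)$ and a scan of $N(x)$, locates two common neighbours in $I^*$ in $O(\deg(x)+\deg(y))$ time; summed over $M$ this telescopes to $O(m)$. Step 5 is handled symmetrically by scanning, for each $v \in I^*$, the list $N(v)$ and using a lookup table that maps each matched vertex to its $M^*$-partner to perform $O(1)$ path/triangle/good-triangle tests; again this sums to $O(\sum_v \deg(v)) = O(m)$. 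The main obstacle I anticipate is keeping the case analysis clean around step 3 when checking the basic-split-graph property, because that step destroys the "natural" center edges of $M$ and reattaches their endpoints to previously unmatched vertices, so one has to track carefully which vertex plays which role in the resulting easy subgraphs.
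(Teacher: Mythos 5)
Your proposal is correct and follows essentially the same route as the paper: verify that $\mathcal F^*$ is an easy packing (independence of $I$ via maximality of $M$, the step-5 check handling triangles), bound $m(F)\ge |F|-1$ with $|F|\ge 2$ for each part, and account for each step's running time. If anything, your step-3 analysis --- charging the common-neighbor search for each $\{x,y\}\in M$ to $\deg(x)+\deg(y)$ and summing to $O(m)=O(dn)$ --- is cleaner than the paper's per-edge $O(\Delta^2)=O(1)$ claim, which is not justified for $d$-degenerate graphs of unbounded maximum degree.
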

\begin{proof}
Clearly, $F^*$ is an easy packing. Steps 1, 2, and 4 can be performed in linear time in a straightforward manner. Step 3 can be executed in linear time by first adding a marker to every vertex in~$I$ and then, computing~$N(x) \cap N(y)$ for each~$\{x, y\} \in M$ in $O(\Delta^2)=O(1)$ time. If the intersection contains two marked vertices, we remove their marks, and add the corresponding edges to~$M^*$. Finally, step 5 can be performed in linear time by storing for each matched vertex its partner and then iterating over the neighborhood of each~$v \in I^*$. Thus, the entire algorithm can be executed in linear time. To complete the proof, observe that each $G[F_i]$, $F_i \in \mathcal{F}$, is connected and consists of at least one edge. This implies that $m(\mathcal F^*) \geq |V_{\mathcal F^*}|/2$, and so the lemma holds.
\end{proof}

Next we need to provide an upper bound on~$\opt(G)$.
For this we first make an observation on the matching~$M^*$ that we computed in Step 3 of our algorithm above.

\begin{lemma}
\label{lem:matching-triangles}
Each edge in~$M^*$ forms at most one triangle with a vertex from~$I^*$.
\end{lemma}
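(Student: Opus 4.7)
The proof will separate the edges of $M^*$ according to how they entered $M^*$ during Step~3, using two structural facts about the algorithm: $M$ is \emph{maximal}, so its unmatched vertex set $I$ is independent in $G$; and $I^*$ is monotonically shrinking throughout Steps~2--5, since it is only ever reduced in Step~3 (and later in Step~5).

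First I would handle edges of the form $\{x,y\}\in M\cap M^*$, i.e.\ those added in the ``otherwise'' branch of Step~3. The branch condition ensures that when $\{x,y\}$ was processed, at most one vertex of the then-current $I^*$ was simultaneously adjacent to both $x$ and $y$; had there been two, the edge would have been split. Since $I^*$ only loses members after this point, the final $I^*$ is contained in that earlier snapshot, so at most one vertex of the final $I^*$ can complete a triangle with $\{x,y\}$.

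Next I would handle the ``split'' edges, i.e.\ pairs $\{u,x\},\{v,y\}\in M^*$ produced from some $\{x,y\}\in M$, where $u,v$ were drawn from $I^*\subseteq I$. Here I will argue the stronger statement that \emph{no} vertex of $I^*$ forms a triangle with $\{u,x\}$: any candidate $w\in I^*$ lies in $I$, so $\{u,w\}\in E$ would contradict independence of $I$ inherited from the maximality of $M$. The same reasoning, symmetrically, applies to $\{v,y\}$. Combining the two cases yields the claimed bound. The only possible subtlety is to fix which snapshot of $I^*$ the statement refers to, but the monotonicity of $I^*$ makes the bound uniform across all snapshots, so beyond this short case analysis I do not anticipate any obstacle.
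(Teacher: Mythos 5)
Your proof is correct and follows essentially the same route as the paper's: a case split on whether the $M^*$-edge was kept from $M$ (using the Step~3 condition) or produced by splitting (using that $I^*\subseteq I$ is independent by maximality of $M$). Your explicit handling of the snapshot/monotonicity of $I^*$ is in fact slightly more careful than the paper's wording, but it is the same argument.
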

\begin{proof}
Let~$\{x, y\} \in M^*$. If this edge is also in~$M$, then it forms a triangle with at most one vertex from~$I$ due to Step 3 of the algorithm above. Otherwise, one of its endpoints, say~$x$, was originally unmatched, that is~$x \in I$. Assume that there is a vertex~$v \in I^* \subseteq I$ such that~$v, x, y$ form a triangle.
Then~$\{v, x\} \notin M$ but both~$v$ and~$x$ were in~$I$, a contradiction to the maximality of~$M$ (see Step 1).
\end{proof}

Now we are ready to bound~$\opt(G)$ from above.
Recall that~$V_{\mathcal F^*}$ is the set of vertices in the packing computed by our algorithm above.

\begin{lemma}
\label{lem:opt-vs-unpacked}
$\opt(G) \le d \cdot |V_{\mathcal F^*}|$, where~$d$ is the degeneracy of~$G$.
\end{lemma}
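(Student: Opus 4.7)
The plan is to upper-bound $\opt(G)$ by $|E(G[V_{\mathcal F^*}])|$, which is at most $d \cdot |V_{\mathcal F^*}|$ because every induced subgraph of a $d$-degenerate graph is itself $d$-degenerate. Writing $I^*$ for the final value of that variable after the algorithm terminates (so that $V = V_{\mathcal F^*} \cup I^*$), I would account for the edges of $G$ incident to $I^*$ by organizing them into edge-disjoint bad triangles and showing that each such triangle can contribute at most $+1$ to $\val_x(G)$, rather than the three units one might pay for naively.

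First, I would observe that $I^*$ is an independent set: it is contained in the set $I$ produced in Step 1, which is independent because $M$ is a \emph{maximal} matching. So there are no edges inside $I^*$. Moreover, every outside vertex of every easy subgraph $F \in \mathcal F^*$ was moved into $F$ from $I^*$ in Step 5, and therefore also lies in the original $I$; independence of $I$ then forbids any edge between $I^*$ and an outside vertex. Consequently, every edge from a vertex $v \in I^*$ to $V_{\mathcal F^*}$ must land on an endpoint of some edge $\{x,y\} \in M^*$. Examining the induced subgraph on $\{v,x,y\}$: if $v$ were adjacent to exactly one of $x,y$, these three vertices would induce a path and Step 5 would have removed $v$ from $I^*$; so $v$ is adjacent to both, yielding a triangle which, again by Step 5, must be bad.

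By Lemma~\ref{lem:matching-triangles} each $\{x,y\} \in M^*$ participates in at most one such triangle, so these bad triangles are pairwise edge-disjoint. Let $T$ denote their number. Together they contain all $2T$ edges between $I^*$ and $V_{\mathcal F^*}$ along with $T$ matching edges from $M^* \subseteq E(G[V_{\mathcal F^*}])$, and by independence of $I^*$ these are all edges incident to $I^*$. For any $x$ and any bad triangle $u,v,w$, the three quantities $a_{uv}x_u x_v$, $a_{vw}x_v x_w$, $a_{wu}x_w x_u$ multiply to $a_{uv}a_{vw}a_{wu} = -1$, so an odd number of them equal $-1$ and their sum is at most $+1$. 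Each of the remaining $|E(G[V_{\mathcal F^*}])| - T$ edges contributes at most $+1$ since $a_e \in \{-1,0,1\}$. Summing, $\val_x(G) \le T + (|E(G[V_{\mathcal F^*}])| - T) = |E(G[V_{\mathcal F^*}])| \le d \cdot |V_{\mathcal F^*}|$, as desired.

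The main obstacle is the structural classification of the edges incident to $I^*$: one must combine the independence of the original $I$ (to rule out adjacency to outside vertices) with the precise stopping condition of Step 5 (to kill the ``path'' and ``good triangle'' alternatives) before Lemma~\ref{lem:matching-triangles} can be used to make the resulting collection of bad triangles edge-disjoint and thus amenable to the $-1$ product argument.
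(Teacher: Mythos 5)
Your proof is correct and follows essentially the same route as the paper's: independence of $I^*$, the Step~5 stopping condition forcing bad triangles on $M^*$ edges, Lemma~\ref{lem:matching-triangles} for edge-disjointness, and the degeneracy of $G[V_{\mathcal F^*}]$. The only difference is cosmetic bookkeeping -- you charge each bad triangle at most $+1$ and conclude $\opt(G) \le |E(G[V_{\mathcal F^*}])|$ directly, whereas the paper counts at least $|T|$ bad edges and finishes with a degree-sum computation; the two accountings give the same bound.
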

\begin{proof}
Let~$M^*$ be the matching and~$\mathcal F^*$ be the easy packing computed above. Then~$I^* = V \setminus V_{\mathcal F^*}$. Let~$T$ be the set of triangles that contain a vertex from~$I^*$. We first show that the triangles in~$T$ are edge-disjoint. Observe that~$I^*$ is an independent set, as~$I^* \subseteq I$ is a subset of the vertices that were left unmatched in step 1; an edge within this vertex set would contradict the maximality of the matching~$M$ computed in this step. So, if two triangles in~$T$ share an edge, then it must be an edge in~$M^*$. But this is impossible according to Lemma~\ref{lem:matching-triangles}, and so the triangles in $T$ are indeed edge-disjoint.

Next observe that if a vertex $v$ is in $I^*$ at the end of the algorithm above, we know that either~$v$ is adjacent neither to~$x$ nor to~$y$, or~$\{v, x, y\}$ induces a bad triangle in $G$. Thus, $T$ consists solely of bad triangles. Furthermore, for each~$v \in I^*$, there are~$\deg(v)/2$ bad triangles that contain~$v$, where $\deg(v)$ is the degree of $v$ in $G$. Thus, overall there are at least~$|T| = \sum_{v \in I^*} \deg(v)/2$ bad triangles in~$G$. As the triangles in~$T$ do not share edges, we obtain that there are at least~$|T|$ bad edges in every solution for~$G$. Consequently,
\[
\opt(G) \le (||A||-|T|)-|T| = m-2|T| = m - \sum_{v \in V'} \deg(v).
\]
Now, as $G[V_{\mathcal F^*}]$ is a $d$-degenerate graph, it has at most $d \cdot |V_{\mathcal F^*}|$ edges. Thus, 
\begin{align*}
2\opt(G) &\le 2m - 2\sum_{v \in I^*} \deg(v) = \sum_{v \in V} \deg(v) - 2\sum_{v \in I^*} \deg(v) \\
&\le \sum_{v \in V_{\mathcal F^*}} \deg(v) \le 2d \cdot |V_{\mathcal F^*}|,
\end{align*}
and the lemma follows.
\end{proof}

\begin{proof}[Proof of Theorem~\ref{thm:Degenerate}]
By Lemma~\ref{lem:easy-pack-val} and Lemma~\ref{lem:max-easy-pack} we can compute in linear time an easy packing~$\mathcal F^*$ and a solution~$x$ with~$\val_x(G) \ge |V_{\mathcal F^*}|/2$.
Using Lemma~\ref{lem:opt-vs-unpacked} we obtain
\[
\val_x(G) \ge |V_{\mathcal F^*}|/2 \ge \opt(G)/2d,
\]
and the theorem is proven.
\end{proof}

\subsection{Graphs of Bounded Density}
\label{sec:Density}

We next turn to handle graphs of bounded density $\delta$ without isolated vertices. Here we use a particular kind of easy packings described in section \ref{sec:Degen}, were each vertex subset induces a star in $G$. A \emph{star packing} of $G$ is a family of pairwise disjoint subsets of vertices $\mathcal{F}=\{F_1,\ldots,F_t\}$ such that each $G[F_i]$ is a star. Note that since a star is a basic split graph without bad triangles, a star packing is also an easy packing. We compute a specific star $\mathcal{F}^*$ packing inspired by the work of Erd\H{o}s, Gy\'{a}rf\'{a}s, and Kohayakawa~\cite{ErdosGK97} concerning graph cuts.  

To compute $\mathcal{F}^*$, we first compute a maximum matching $M$ in $G$. Note that we require $M$ to be of maximum cardinality, and not only maximal inclusion-wise. Let $V_M \subseteq V$ denote the vertices in $G$ that are matched in $M$, and let $I=V \setminus V_M$. We initialize the packing $\mathcal{F}^*$ with the matching $M$, that is, $\mathcal F^* = \{\{x, y\} : \{x, y\} \in M\}$. We then iterate through vertices $v \in I$, and add $v$ to any subset $F$ such that $G[F \cup \{v\}]$ is a star. Let $V_{\mathcal{F}^*}=\bigcup_{F \in \mathcal{F}^*}F$ denote the set of vertices in the packing at the end of this process, and let $I^*= V \setminus V_{\mathcal{F}^*}$.  

\begin{lemma}
\label{lem:StarPack}%
Let $F \in \mathcal{F}^*$. Then there is at most one vertex in $v \in I^*$ which is adjacent to any vertex in $F$.    
\end{lemma}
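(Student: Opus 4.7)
The plan is to argue by contradiction: assume two distinct vertices $u, v \in I^*$ both have at least one neighbor in $F$. Write the seed of $F$ as the matching edge $\{x, y\} \in M$ from Step 1, and denote by $\ell_1, \ldots, \ell_k$ the leaves that the greedy loop subsequently added to $F$ from $I$. If $k \geq 1$, the center of the star $G[F]$ is uniquely determined; by symmetry I take $x$ to be this center, so that $y, \ell_1, \ldots, \ell_k$ are all adjacent within $F$ only to $x$.

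The first observation I would use is that $I = V \setminus V_M$ is an independent set, because $M$ is a \emph{maximum} matching (not merely maximal): any edge inside $I$ could be added to $M$. Since every $\ell_i$ lies in $I$ and $u, v \in I^* \subseteq I$, this immediately rules out $u$ or $v$ being adjacent to any $\ell_i$. Hence every neighbor of $u$ or $v$ in $F$ must lie in $\{x, y\}$.

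Next I would show that each of $u, v$ must actually be adjacent to \emph{both} $x$ and $y$. Suppose for instance that $u$ is adjacent to $x$ only: at the moment the greedy loop processed $u$, the current snapshot $F_t \subseteq F$ was a star centered at (or center-compatible with) $x$, $u$ was adjacent to none of its leaves, and hence $F_t \cup \{u\}$ would still have been a star, contradicting $u \in I^*$. Suppose instead that $u$ is adjacent to $y$ only: if $k = 0$, then at $u$'s turn $F_t = \{x, y\}$ and $u$ would extend it to a star centered at $y$, another contradiction; if $k \geq 1$, the length-three path $\ell_1 - x - y - u$, with unmatched endpoints $\ell_1, u \in I$ and middle edge $\{x, y\} \in M$, is an $M$-augmenting path, contradicting maximality of $M$. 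The same dichotomy applies to $v$.

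Finally, with $u$ and $v$ each adjacent to both $x$ and $y$, I would replace $\{x,y\} \in M$ by the two edges $\{u,x\}$ and $\{v,y\}$: since $u, v$ are distinct, unmatched in $M$, and different from $x, y$, the resulting edge set is a matching of size $|M|+1$, again contradicting maximality of $M$. The delicate subcase is ``$u$ adjacent to $y$ only with $k \geq 1$'': this is where the augmenting-path trick enters, and it is essentially the only place where the full strength of $M$ being a \emph{maximum} (rather than merely inclusion-wise maximal) matching is used.
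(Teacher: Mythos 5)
Your proof is correct and follows essentially the same route as the paper's: independence of $I$ (as $M$ is maximal) confines the neighbors of $I^*$-vertices in $F$ to the seed endpoints $x,y$, the greedy addition rule rules out adjacency to only one endpoint, and exchanging $\{x,y\}$ for two edges to distinct unmatched vertices contradicts $M$ being maximum. Two minor remarks: your explicit augmenting path $\ell_1$--$x$--$y$--$u$ carefully covers the subcase where the star ends up centered at the endpoint your vertex does not touch (a configuration the paper's proof passes over when it simply asserts that $G[F\cup\{v\}]$ is a star), while your closing comment that this is the only spot needing a \emph{maximum} rather than merely maximal matching is slightly off, since the final replacement of $\{x,y\}$ by $\{u,x\},\{v,y\}$ is itself an augmentation along $u$--$x$--$y$--$v$ and likewise requires maximum cardinality.
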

\begin{proof}
Let $\{x,y\} \subseteq F \cap V_M$ be the edge in $M$ from which $F$ was initialized. Then, as~$F \setminus \{x, y\} \subseteq I$, any vertex in $I^*$ can only be adjacent to $x$ and $y$ in $F$. Let $I^*(x) \subseteq I^*$ and $I^*(y) \subseteq I^*$ respectively denote the set of neighbors of $x$ and $y$ in $I^*$. If both $I^*(x) = \emptyset$ and $I^*(y) = \emptyset$ we are done. So assume w.l.o.g. that $I^*(x) \neq \emptyset$ and let $v \in I^*(x)$. Now if there exists a vertex $u \in I^*(y) \setminus I^*(x)$, then replacing $\{x,y\}$ with $\{x,v\}$ and $\{y,u\}$ in $M$ gives us a matching $M'$ with $|M'| > |M|$, contradicting the fact that $M$ is a maximum matching. Thus, $I^*(y) \setminus I^*(x) = \emptyset$. Furthermore, if $v \notin I^*(y)$ then $G[F \cup \{v\}]$ is a star, and so $v$ would have been added to $F$ by the algorithm above. Thus, $I^*(x) = I^*(y)$. Finally, if there are two distinct vertices $u,v \in I^*(x)$, then $u,v \in I^*(y)$, and again we can replace $\{x,y\}$ with $\{x,v\}$ and $\{y,u\}$ in $M$. This implies that $|I^*(x) \cup I^*(y)| \leq 1$, and so the lemma is proven. 
\end{proof}

\begin{lemma}
\label{lem:LowerBound}%
$m(\mathcal{F}^*) \geq m/3\delta$. 
\end{lemma}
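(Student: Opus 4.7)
The plan is to show that $n \leq 3\, m(\mathcal{F}^*)$, which combined with the density bound $m \leq \delta n$ immediately gives $m(\mathcal{F}^*) \geq n/3 \geq m/(3\delta)$. To get $n \leq 3\,m(\mathcal{F}^*)$, I will split $n = |V_{\mathcal{F}^*}| + |I^*|$ and bound each piece separately in terms of $m(\mathcal{F}^*)$, using the key identity $m(\mathcal{F}^*) = |V_{\mathcal{F}^*}| - |\mathcal{F}^*|$, which holds because each star $F \in \mathcal{F}^*$ on $|F| \geq 2$ vertices contributes exactly $|F|-1$ edges.

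First, I would observe that every star has at least $2$ vertices and a matching number of internal edges, so $|F| \leq 2(|F|-1)$ for each $F \in \mathcal{F}^*$; summing over the packing yields $|V_{\mathcal{F}^*}| \leq 2\,m(\mathcal{F}^*)$. This piece is essentially free from the definition of a star packing.

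The second bound is $|I^*| \leq |\mathcal{F}^*|$, and it is where the structural assumptions come in. Since $M$ was chosen as a \emph{maximum} matching (not merely maximal), no edge can have both endpoints unmatched, so $I$ is an independent set; hence so is $I^* \subseteq I$. Because the input graph has no isolated vertices, every $v \in I^*$ has at least one neighbor, and that neighbor must lie in $V_{\mathcal{F}^*}$, i.e., in some star of the packing. Now I would invoke Lemma~\ref{lem:StarPack}: each star of $\mathcal{F}^*$ is adjacent to at most one vertex of $I^*$. Counting the pairs $\{(v,F) : v \in I^*,\ F \in \mathcal{F}^*,\ v \text{ adjacent to some vertex of } F\}$, Lemma~\ref{lem:StarPack} upper-bounds the total by $|\mathcal{F}^*|$, while the no-isolated-vertex assumption guarantees each $v \in I^*$ contributes at least one pair, giving $|I^*| \leq |\mathcal{F}^*|$. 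Combined with $|\mathcal{F}^*| \leq m(\mathcal{F}^*)$ (each star has at least one edge), this yields $|I^*| \leq m(\mathcal{F}^*)$.

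Putting the two bounds together, $n = |V_{\mathcal{F}^*}| + |I^*| \leq 2\,m(\mathcal{F}^*) + m(\mathcal{F}^*) = 3\,m(\mathcal{F}^*)$, and since $m \leq \delta n$, the lemma follows. The only mildly delicate step is the use of maximum matching (as opposed to maximal) to guarantee that $I^*$ is independent, so that Lemma~\ref{lem:StarPack} can be applied in the double-counting argument above; everything else is elementary counting.
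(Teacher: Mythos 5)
Your proof is correct and follows essentially the same route as the paper: bounding $|V_{\mathcal{F}^*}| \le 2\,m(\mathcal{F}^*)$ from the star structure, bounding $|I^*| \le |\mathcal{F}^*| \le m(\mathcal{F}^*)$ via Lemma~\ref{lem:StarPack} together with the no-isolated-vertices assumption, and concluding from $n = |V_{\mathcal{F}^*}| + |I^*| \le 3\,m(\mathcal{F}^*)$ and $m \le \delta n$. (One minor remark: the independence of $I^*$ already follows from maximality of $M$; the maximum-cardinality assumption is what powers the swap argument inside Lemma~\ref{lem:StarPack} itself.)
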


\begin{proof}
Since each $F \in \mathcal{F}^*$ induces a star in $G$ we have $2m(\mathcal{F}^*) \geq |V_{\mathcal{F}^*}|$. Moreover, as by Lemma~\ref{lem:StarPack}, each $F \in \mathcal{F}^*$ is adjacent to at most one vertex in $I^*$, and each vertex in $I^*$ is adjacent to at least one $F \in \mathcal{F}^*$ as $G$ has no isolated vertices, we have $|\mathcal{F}^*| \geq |I^*|$. Since each $F \in \mathcal{F}^*$ induces at least one edge, we have $m(\mathcal{F}^*) \geq |\mathcal{F}^*|$, and so $m(\mathcal{F}^*) \geq |I^*|$. Thus, combining all of the above, we get
\[
3m(\mathcal{F}^*) \geq |V_{\mathcal{F}^*}|+|I^*|=n \geq m /\delta,
\]
and the lemma is proven.
\end{proof}

\begin{proof}[Proof of Theorem~\ref{thm:Density}]
Due to Lemma~\ref{lem:easy-pack-val} and~\ref{lem:LowerBound}, the packing $\mathcal{F}^*$ yields a solution $x$ with $\val_x(G) \geq m/3\delta$. Since $\opt(G) \leq ||A||=m$, this solution is $1/3\delta$-approximate. The running time for computing~$\mathcal F^*$ is dominated by the computation of the maximum matching $M$ for the initial packing, taking~$O(m\sqrt n) = O(n^{1.5})$ time~\cite{MicaliVazirani}; computing~$\mathcal F^*$ from $M$ can be done in $O(m+n)=O(n)$ time. 
\end{proof}


\section{Algorithms Based On Treewidth Partitions}

In this section we present approximation algorithms for sparse \MaxQP{} instances that exclude certain types of minors. Namely, we prove Theorems~\ref{thm:bdltw} and~\ref{thm:h-minor-free}. Our algorithms all revolve around the Baker technique for planar graphs~\cite{Baker1994} and its generalizations~\cite{DemaineHK05,Eppstein99,Grohe03}, all using what we refer to here as a \emph{treewidth partition} --- a partition of the vertices of $G$ into $V_0,\ldots,V_{k-1}$ such that $G-V_i$ has bounded treewidth for any subset $V_i$ in the partition. As treewidth plays a central role here, we begin with formally defining this notion.

A tree decomposition is a pair~$(\mathcal{T}, \mathcal{X})$ where~$\mathcal X$ is a family of vertex subsets of~$G$, called \emph{bags}, and~$\mathcal T$ is a tree with~$\mathcal X$ as its node set. The decomposition is required to satisfy (i) $\{X \in \mathcal X : v \in X\}$ is connected in~$\mathcal T$ for each~$v \in V$, and (ii) for each~$\{u, v\} \in E$ there is a bag~$X \in \mathcal X$ that contains both~$u$ and $v$. The \emph{width} of a tree decomposition~$(\mathcal T, \mathcal X)$ is~$\max_{X \in \mathcal X} |X|-1$, and the \emph{treewidth} of~$G$ is the smallest width amongst all its tree decompositions. The proof of the following lemma, relying on standard dynamic programming techniques, is deferred to Appendix~\ref{sec:treewidth}.
\begin{lemma}
\label{lem:boundedTW}%
\MaxQP{} restricted to graphs of treewidth at most~$k$ can be solved in~$2^{O(k)} \cdot n$ time.
\end{lemma}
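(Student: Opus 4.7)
The plan is a standard dynamic programming over a nice tree decomposition. First, I would invoke a known constant-factor treewidth approximation (e.g., Bodlaender et al.'s $5$-approximation) to compute in $2^{O(k)} \cdot n$ time a tree decomposition $(\mathcal T,\mathcal X)$ of $G$ of width~$O(k)$, and then convert it in linear time into a rooted \emph{nice} tree decomposition with $O(k\cdot n)$ nodes of four types (leaf, introduce vertex, forget vertex, join), each bag having size $O(k)$.

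For each node $t$ of $\mathcal T$ with bag $X_t$, let $V_t$ denote the union of all bags in the subtree rooted at $t$ and let $G_t = G[V_t]$. The DP table at $t$ is a function
\[
D_t : \{-1,1\}^{X_t} \to \mathbb R, \qquad D_t(\sigma) = \max_{x \in \{-1,1\}^{V_t},\, x|_{X_t} = \sigma} \val_x(G_t[V_t\setminus X_t] \cup E_t),
\]
where $E_t$ is the set of edges of $G_t$ already ``accounted for'' in the subtree. The cleanest bookkeeping is to \emph{not} count edges internal to $X_t$ in $D_t$; those edges will be paid for once $X_t$ is forgotten, using the fact that the tree decomposition guarantees every edge $\{u,v\} \in E$ lies in some bag. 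Each table has $2^{|X_t|} = 2^{O(k)}$ entries.

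The transitions are then routine. At a leaf, $D_t \equiv 0$. At an introduce-vertex node introducing $v$, we set $D_t(\sigma) = D_{t'}(\sigma|_{X_{t'}})$ for each extension $\sigma$ of the child assignment; no edges are added yet. At a forget-vertex node forgetting $v$ with child $t'$, we take
\[
D_t(\sigma) = \max_{b \in \{-1,1\}} \Bigl( D_{t'}(\sigma \cup \{v \mapsto b\}) + \sum_{u \in X_{t'},\, \{u,v\}\in E} a_{u,v}\,b\,\sigma(u) \Bigr),
\]
which finally charges every edge incident to $v$ to the (now unique) last bag containing $v$. At a join node with children $t_1, t_2$ sharing bag $X_t$, we set $D_t(\sigma) = D_{t_1}(\sigma) + D_{t_2}(\sigma)$, which is consistent because edges internal to $X_t$ have not yet been counted in either child. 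After processing the root $r$, we return $\max_\sigma D_r(\sigma)$, adding in any edge contributions still inside $X_r$ by applying one more ``forget all'' step.

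Each of the $O(kn)$ nodes is processed in $2^{O(k)}$ time (introduce and forget involve a linear pass over at most $k$ bag neighbors per assignment; join is a single addition per assignment). Combined with the $2^{O(k)}\cdot n$ preprocessing, the total runtime is $2^{O(k)} \cdot n$, matching the claim. The only step that requires real care, rather than being purely mechanical, is the edge-accounting convention: one must fix a rule ensuring that every edge of~$G$ is paid for \emph{exactly once} across the whole DP, which is why I defer internal-bag edges to the forget nodes and only introduce external contributions there.
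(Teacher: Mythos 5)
Your proof is correct and follows essentially the same route as the paper: compute a width-$O(k)$ nice tree decomposition in $2^{O(k)}\cdot n$ time and run the standard dynamic program over leaf, introduce, forget, and join nodes, with $2^{O(k)}$ table entries per node. The only difference is the bookkeeping convention --- you charge each edge at the forget node of its first-forgotten endpoint, whereas the paper charges edges at introduce nodes and subtracts the doubly counted bag contribution $\val_x(G[X])$ at join nodes --- and both conventions account for every edge exactly once.
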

\subsection{Apex-minor free instances}
\label{sec:bdltw}

Our starting point for the~$(1-\eps)$-approxima\-tion for \MaxQP{} on~$H$-minor free graphs with~$H$ being an apex graph (Theorem~\ref{thm:bdltw}) is a layer decomposition $L_0,\ldots,L_\ell \subseteq V$ of $G$. Here $L_0=\{v\}$ for some arbitrary vertex $v \in V$, $L_i =\{ u: d(v,u)=i\}$ are all vertices at distance $i$ from $v$, for each $i \in \{1,\ldots,\ell\}$. This is the standard starting point of all Baker-type algorithms, and can be computed in linear time via breadth-first search from~$v$. Note that $L_0,\ldots,L_\ell$ form a partition of $V$, and that for each $i \in \{0,\ldots,\ell\}$, each vertex in $L_i$ has neighbors only in $L_{i-1} \cup L_i \cup L_{i+1}$ (here and elsewhere in this section we set $L_{-1}=L_{\ell+1}=\emptyset$ when necessary). 

Given $0 < \eps \leq 1$, we let $k$ be the smallest integer such that~$4/k \leq \eps$. For each $i \in \{0,\ldots,k-1\}$, let~$\mathcal{L}_i$ denote the union of all vertices in layers with index equal to $i (\mmod k)$; that is, $\mathcal{L}_i = \bigcup_{j \equiv i (\mmod k)} L_j$. We define two subgraphs of $G$: The graph~$G_i$ is the graph induced by $V - \mathcal{L}_i$, and the graph $H_i$ is the graph induced by $N[\mathcal{L}_i]$. Note that there is some overlap between the vertices of $G_i$ and~$H_i$, but each edge of~$G$ appears in exactly one of these subgraphs. Also note that since there is an apex graph~$H$ that~$G$ does not contain as a minor, $G$ and each of its minors have bounded local treewidth~\cite{Eppstein00}; thus both~$G_i$ and~$H_i$ are bounded treewidth graphs~\cite{Grohe03}.

Our algorithm computes $k$ different solutions for $G$, and selects the best one (\emph{i.e.}, the solution~$x$ with highest~$\val_x(G)$) as its solution. For $i \in \{0,\ldots,k-1\}$, we first compute an optimal solution for $G_i$ in linear time using the algorithm given in Lemma~\ref{lem:boundedTW}. We then extend this solution to a solution $x^{(i)}$ for $G$ as is done in Lemma~\ref{lem:monotonic}. In this way we obtain in linear time $k$ solutions $x^{(0)},\ldots,x^{(k-1)}$ with $\val_{x^{(t)}}(G) \geq \opt(G_i)$ for each $i \in \{0,\ldots,k-1\}$. In Lemma~\ref{lem:localTwLayers} we argue that the solution of maximum objective value is $(1-\eps)$-approximate to the optimum of $G$; the proof of Theorem~\ref{thm:bdltw} will then follow as a direct corollary.

\begin{lemma}
\label{lem:localTwLayers}
There is a solution $x \in \{x^{(0)},\ldots,x^{(k-1)}\}$ with $\val_x(G) \geq (1-\eps) \cdot \opt(G)$.
\end{lemma}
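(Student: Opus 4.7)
The plan is to run a Baker-style averaging argument over the~$k$ shifts, leveraging the monotonicity of Lemma~\ref{lem:monotonic}. Fix an optimum~$x^*$ for~$G$, so $\val_{x^*}(G) = \opt(G)$. For each~$i$, the restriction of~$x^*$ to~$V \setminus \mathcal{L}_i$ is a feasible solution for~$G_i$, whence $\opt(G_i) \ge \val_{x^*}(G_i)$. Combined with the algorithm's guarantee $\val_{x^{(i)}}(G) \ge \opt(G_i)$ and the fact that every edge of~$G$ lies in exactly one of~$G_i$ or~$H_i$, this gives $\val_{x^{(i)}}(G) \ge \opt(G) - \val_{x^*}(H_i)$. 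Hence the proof reduces to exhibiting some index~$i^*$ with $\val_{x^*}(H_{i^*}) \le \eps\,\opt(G)$.

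My plan to find such an~$i^*$ is a simple averaging argument. Writing $\sigma_e = a_{u,v}\,x^*_u x^*_v$ for each edge $e = \{u,v\}$, the key structural observation is that the two endpoints of any edge lie in BFS layers whose indices differ by at most one, hence in at most two distinct classes~$\mathcal{L}_i$; consequently each edge appears in at most two of the~$H_i$'s (intra-class edges in exactly one, inter-class edges in exactly two). Setting $A = \sum_{e \text{ intra}} \sigma_e$ and $B = \sum_{e \text{ inter}} \sigma_e$, so that~$A+B = \opt(G)$, this yields
\[
\sum_{i=0}^{k-1} \val_{x^*}(H_i) \;=\; A + 2B \;=\; 2\,\opt(G) - A.
\]

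The hard part will be controlling~$A$ from below, which is the technical subtlety specific to signed \MaxQP{}: when every~$\sigma_e \ge 0$ (as in classical planar \MaxCut) one immediately has $A \ge 0$, and the bound $\sum_i \val_{x^*}(H_i) \le 2\opt(G)$ already gives the sharper $(1-2/k)$-loss, but here intra-class contributions at the optimum can be negative. My plan is to exploit the local optimality of~$x^*$: since no subset-flip improves $\val_{x^*}(G)$, the cut-weight across any bipartition of~$V$ is non-negative. Specialised to single-vertex flips and aggregated over each class~$\mathcal{L}_i$, this yields $2A_i + B_{i-1} + B_i \ge 0$, where~$A_i$ and~$B_i$ are the per-class analogues of~$A$ and~$B$. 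Combining these inequalities with the identity~$A + B = \opt(G)$ should give $A \ge -2\,\opt(G)$ and hence $\sum_i \val_{x^*}(H_i) \le 4\,\opt(G)$, which is precisely what explains the factor~$4/k$ and the choice $k \ge 4/\eps$. Averaging then produces an~$i^*$ with $\val_{x^*}(H_{i^*}) \le 4\,\opt(G)/k \le \eps\,\opt(G)$, and the corresponding solution satisfies $\val_{x^{(i^*)}}(G) \ge (1-\eps)\,\opt(G)$, completing the proof.
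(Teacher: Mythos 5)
Your skeleton is the paper's: you pass from $\val_{x^{(i)}}(G)\ge\opt(G_i)\ge\val_{x^*}(G_i)=\opt(G)-\val_{x^*}(H_i)$ to the need for an index with small $\val_{x^*}(H_{i})$, and you count $\sum_i\val_{x^*}(H_i)=A+2B=2\opt(G)-A$. (A cosmetic caveat: under the paper's literal definition $H_i=G[N[\mathcal L_i]]$ an intra-layer edge of $L_a$ also lies in $H_{(a-1)\bmod k}$, since both endpoints have neighbours in $L_{a-1}$; your count is the one for the ``edges incident to $\mathcal L_i$'' reading, which is also what the paper's edge-partition claim presupposes, so this is not where the problem lies.) The genuine gap is exactly the step you flag with ``should give'': the inequalities you propose do not imply $A\ge-2\,\opt(G)$. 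Flip-optimality of $x^*$ at single vertices, aggregated over a class, indeed gives $2A_i+B_{i-1}+B_i\ge 0$, but summing these over $i$ yields precisely $2A+2B\ge 0$, i.e.\ only $\opt(G)\ge 0$, and no stronger combination exists: the abstract system $2A_i+B_{i-1}+B_i\ge 0$ for all $i$, $B_i\ge 0$ (which suffix-of-layers cuts also give you), and $A+B=\opt(G)\ge 0$ is satisfied by, say, $k=4$, $B_i=10$ and $A_i=-10$ for all $i$, where $A=-40$ while $\opt(G)=0$, so $A\ge -2\,\opt(G)$ fails by an arbitrary margin within that system. Hence the bound $\sum_i\val_{x^*}(H_i)\le 4\,\opt(G)$, though true, cannot be extracted from per-class vertex flips; the constant $4$ does not come from local optimality at all.

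What the paper uses instead is monotonicity of $\opt$ rather than optimality of $x^*$ on the $H_i$: it bounds $\val_{x^*}(H_i)\le\opt(H_i)$ and exploits the geometry of the thickened classes. Since $H_i$ lives inside the layers with residues $i-1,i,i+1\ (\mmod k)$, any two $H_{i_1},H_{i_2}$ with $|i_1-i_2|\ge 4$ are vertex-disjoint and non-adjacent in $G$; therefore, for each $j\in\{0,1,2,3\}$, the union $\bigcup_{i\equiv j\ (\mmod 4)}H_i$ sits inside an induced subgraph of $G$, and Lemmas~\ref{lem:ZeroValSol}, \ref{lem:DisjointUnion} and~\ref{lem:monotonic} (crucially, $\opt\ge 0$ for every induced subgraph, plus the sign-flip combination of disjoint parts) give $\sum_{i\equiv j\ (\mmod 4)}\opt(H_i)\le\opt(G)$. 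Summing over the four residues yields $\sum_i\val_{x^*}(H_i)\le\sum_i\opt(H_i)\le 4\,\opt(G)$, which is exactly the inequality $A\ge-2\,\opt(G)$ you were after; from there your averaging step and the choice $4/k\le\eps$ finish the proof. So to repair your argument, replace the vertex-flip step by this mod-$4$ grouping (or some other genuinely global argument); as written, the key inequality is unsupported.
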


\begin{proof}
Let $x^*$ denote the optimal solution for $G$. Then, as the edge set of $G$ is partitioned into the edges of $G_i$ and $H_i$, we have 
\[
\opt(G_i)+\opt(H_i) \geq \val_{x^*}(G_i)+\val_{x^*}(H_i)  = \val_{x^*}(G) =  \opt(G)
\]
for each $i \in \{0,\ldots,k-1\}$. Next observe that any two subgraphs $H_{i_1}$ and $H_{i_2}$ with $|i_1-i_2| \geq 4$ do not have vertices in common, nor are there any edges between these two subgraphs in $G$. It follows that for any $j \in \{0,1,2,3\}$, the graph $\bigcup_{i \equiv j (\mmod 4)} H_i$  is an induced subgraph in $G$, and so $\opt(G) \geq \sum_{i \equiv j (\mmod 4)} \opt(H_i)$ by Lemma~\ref{lem:monotonic}. Thus, we have 
\[
4 \cdot \opt(G) \geq \sum_j \sum_{i \equiv j (\mmod 4)} \opt(H_i) = \sum_i \opt(H_i). 
\]
Combining the two inequalities above we get  
\begin{align*}
\sum_{i=0}^{k-1} \val_{x^{(i)}}(G) \geq \sum_{i=0}^{k-1} \opt(G_i) &\ge k \cdot \opt(G) - \sum_{i=0}^{k-1} \opt(H_i)\\
&\ge k \cdot \opt(G) - 4 \cdot \opt(G) = (k-4) \opt(G).
\end{align*}
It follows that the best solution out of $x^{(0)},\ldots,x^{(k-1)}$ has value at least $(1-4/k)\cdot\val(G)$, which is at least  $(1-\eps)\cdot\val(G)$ since~$4/k \le \eps$. 
\end{proof}

\subsection{General \texorpdfstring{$H$}{H}-minor free instances}

To obtain the~$(1-\eps)$-approximation for \UnitMaxQP{} on~$H$-minor free graphs for any fixed graph~$H$ (Theorem~\ref{thm:h-minor-free}), we make use of the lower bound obtained in Section~\ref{sec:Density}, our algorithm for \MaxQP{} restricted to bounded-treewidth graphs, and the following theorem by Demaine \emph{et al.}~\cite[Theorem 3.1]{DemaineHK05}:
\begin{theorem}[\cite{DemaineHK05}]
\label{thm:MinorFreePartition}
For a fixed graph $H$, there is a constant $c_H$
such that, for any integer $k \geq 1$ and for every $H$-minor free
graph $G$, the vertices of $G$ can be
partitioned into $k$ sets such that the graph obtained by taking the union of any $k-1$ of these sets has treewidth at most $c_H \cdot k$. Furthermore, such a partition can be found in polynomial time.
\end{theorem}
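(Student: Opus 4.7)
The plan is to prove the partition theorem in three stages, reducing the $H$-minor-free case to almost-embeddable pieces via the Robertson--Seymour graph minor structure theorem, handling each almost-embeddable piece by a Baker-style BFS layering, and finally stitching the per-piece partitions together along the clique-sums of the decomposition. Since the statement is a substantial structural theorem, the proposal reconstructs the architecture of the argument rather than its internals.

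I would begin with the warm-up case of a graph $G$ of bounded genus. Fix any vertex $r$, compute the BFS layers $L_0, L_1, \dots, L_\ell$ from $r$, and set $V_i = \bigcup_{j \equiv i \pmod{k}} L_j$ for $i \in \{0, \dots, k-1\}$. After deleting $V_i$, the remaining graph falls apart into slabs of at most $k-1$ consecutive BFS layers; by Eppstein's bounded local treewidth theorem, each such slab of a bounded-genus graph has treewidth $O(k)$, and treewidth is additive over connected components. This already yields the statement with $c_H$ depending only on the genus of the forbidden minor in the planar/bounded-genus special cases.

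Next I would extend this to the $h$-almost-embeddable pieces supplied by the structure theorem---graphs consisting of a bounded-genus part together with at most $h$ apex vertices and at most $h$ vortices of bounded pathwidth attached along face boundaries, where $h$ depends only on $H$. I would assign all apices to a single color class (raising the treewidth bound by the bounded additive term $h$), layer the bounded-genus part as in the warm-up, and distribute each vortex's vertices across the $k$ classes consistently with the BFS-layering on its attachment boundary, exploiting the bounded pathwidth of the vortex to control how it interacts with a slab. After removing $V_i$, each piece becomes a bounded-genus slab of width at most $k-1$ glued to a bounded amount of vortex pathwidth, still of treewidth $O_H(k)$.

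Finally I would glue per-piece partitions along the clique-sum tree provided by the structure theorem. Each clique-sum joins two pieces along a clique $C$ of size at most $h$; traversing the tree from a root, I would cyclically shift the BFS residues in the child so that its coloring of $C$ agrees with the parent's, falling back to absorbing $C$ into a single color class at bounded additive cost when no uniform shift suffices. After deleting $V_i$ from $G$, the result is a clique-sum of $O_H(k)$-treewidth pieces along cliques of size at most $h$, and it is standard that such a clique-sum has treewidth $O_H(k)$, giving the claimed bound $c_H \cdot k$. The main obstacle, and the reason the full structure theorem is needed, is precisely this stitching step: the local BFS colorings coming from independently rooted BFS trees in different pieces must be rotated into global consistency without breaking the slab structure of any single piece, and absorbing troublesome cliques must be done sparingly so that no one class accumulates too much treewidth. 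Polynomial running time follows from the known polynomial-time algorithms that produce the Robertson--Seymour decomposition.
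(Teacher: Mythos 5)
The paper does not prove this statement at all: it is quoted verbatim, with the proof deferred entirely to Demaine, Hajiaghayi, and Kawarabayashi, so there is no in-paper argument to compare yours against. Judged against the proof in that reference, your sketch reconstructs the right architecture --- reduce to $h$-almost-embeddable pieces via the Robertson--Seymour structure theorem, partition each piece by BFS layers modulo $k$, treat apices and vortices as bounded perturbations, and stitch along the clique-sum tree --- and this is indeed how the cited proof goes.

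Two places where your sketch would need repair before it is a proof. First, the clique-sum stitching: a single cyclic shift of the child's residues has one degree of freedom and cannot in general make the child's coloring agree with the parent's on all (up to $h$) vertices of the join clique, and ``absorbing $C$ into a single color class'' is also not available, since the parent may have assigned the vertices of $C$ to several different classes. The mechanism that actually works, and is what the cited proof uses, is to strengthen the inductive statement: for each piece one shows that the $k$-partition can be chosen to respect an \emph{arbitrary prescribed labeling} of a designated set of at most $h$ vertices, by treating those vertices as additional apices. This costs only an additive $O(h)$ in the treewidth of each piece after a class is deleted, and it does not accumulate over the clique-sum tree because each piece has only one parent-facing join set that needs prescribed labels; the child-facing join sets simply receive whatever labels the piece's own layering gives them, and the children inherit those as their prescription. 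Second, the quantitative form of the local-treewidth bound matters: the theorem claims treewidth at most $c_H\cdot k$, linear in $k$, so you need the fact that $k$ consecutive BFS layers of a genus-$g$ graph have treewidth $O(gk)$ (linear local treewidth), not merely that apex-minor-free graphs have bounded local treewidth in Eppstein's qualitative sense, which a priori yields only $f(k)$ for some function $f$. With those two points made precise, your outline matches the proof in the cited source.
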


Note that this theorem gives a partition similar to the one used in the previous subsection, albeit slightly weaker. In particular, there is no restriction on the edges connecting vertices in different subsets of the partition as was the case in the previous subsection. It is for this reason that arbitrary weights are difficult to handle, and we need to resort to the lower bound of Lemma~\ref{lem:LowerBound}. Fortunately, for the unweighted case, we can use the fact that there exists some constant $h$ depending only on $H$ such that $G$ has at most $hn$ edges (see \emph{e.g.}~\cite{Diestel10}). In particular, it can be shown that $h = O(n' \sqrt{\lg n'})$~\cite{Kostochka84}, where $n'$ is the number of vertices of $H$. Combining this fact with Lemma~\ref{lem:LowerBound}, we get:
\begin{lemma}
\label{lem:MinorFree}
$\opt(G) \geq m/3h$.
\end{lemma}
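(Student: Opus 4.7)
The plan is to reduce the bound directly to the star-packing lemma from Section~\ref{sec:Density}. Let $G' = G - V_0$, where $V_0 \subseteq V$ is the set of isolated vertices of $G$. Then $G'$ has the same number of edges as $G$, namely $m$, and since $G'$ is an induced subgraph of the $H$-minor free graph $G$, it is itself $H$-minor free; hence the classical extremal bound gives $|E(G')| \leq h \cdot |V(G')|$. In other words, $G'$ is an $h$-dense graph without isolated vertices, so Lemma~\ref{lem:LowerBound} applies to it (with $\delta = h$) and produces a star packing $\mathcal{F}^*$ of $G'$ with $m(\mathcal{F}^*) \geq m / 3h$.

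From here the rest is mechanical. Lemma~\ref{lem:easy-pack-val}, applied to the star packing $\mathcal{F}^*$ inside $G'$, yields a solution $x'$ for $G'$ with $\val_{x'}(G') \geq m(\mathcal{F}^*) \geq m/3h$. Since $G'$ is an induced subgraph of $G$, Lemma~\ref{lem:monotonic} extends $x'$ to a solution $x$ for $G$ satisfying $\val_x(G) \geq \val_{x'}(G') \geq m/3h$. Therefore $\opt(G) \geq m/3h$, as required.

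There is essentially no hard step; the only thing to be careful about is the presence of isolated vertices, which is why we must pass to $G'$ before invoking Lemma~\ref{lem:LowerBound} (whose hypothesis explicitly excludes isolated vertices) and then lift back to $G$ via monotonicity. The rest is just bookkeeping: the extremal bound $m \leq h \cdot n$ for $H$-minor free graphs supplies the density parameter, and everything else is inherited from the machinery already developed for $\delta$-dense instances.
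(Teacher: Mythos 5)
Your proof is correct and takes essentially the same route as the paper, which obtains this lemma directly by combining the extremal bound $m \le hn$ for $H$-minor free graphs with the star-packing bound of Lemma~\ref{lem:LowerBound} (together with Lemma~\ref{lem:easy-pack-val} to turn the packing into a solution). Your explicit detour through $G' = G - V_0$ to handle isolated vertices, lifted back via Lemma~\ref{lem:monotonic}, is a detail the paper leaves implicit but does not change the argument.
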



Our algorithm proceeds as follows. Fix $k \geq 6h/\eps$, and let $V_0,\ldots,V_{k-1}$ denote the partition of $V$ computed by the algorithm from Theorem~\ref{thm:MinorFreePartition}. For each  $i \in\{0,\ldots,k-1\}$, let $E_i$ denote the set of edges $E(V_i,\bigcup_{j \neq i} V_j)$, and let $m_i=|E_i|$. Furthermore, let $G_i=G-E_i$. As both $G[V_i]$ and $G[V \setminus V_i]$ have bounded treewidth, we can compute an optimal solution for each of these subgraphs (and therefore also for $G_i$) using the algorithm in Lemma~\ref{lem:boundedTW}. Using Lemma~\ref{lem:DisjointUnion}, we can extend the optimal solutions for $G[V_i]$ and $G[V \setminus V_i]$ to a solution~$x^{(i)}$ for $G$ with value 
\[
\val_{x^{(i)}}(G) \geq \opt(G_i).
\]
On the other hand, the optimal solution of $G$ cannot do better than 
\[
\opt(G_i) + m_i \geq \opt(G).
\]

Combining the two inequalities above, we can bound the sum of the objective values obtained by all our solutions by 
\begin{align*}
\sum_{i=0}^{k-1} \val_{x^{(i)}}(G) &\geq \sum_{i=0}^{k-1} \opt(G_i) \geq \sum_{i=0}^{k-1} (\opt(G)-m_i)\\
&= \sum_{i=0}^{k-1} \opt(G)-2m \geq (k-6h) \cdot \opt(G),
\end{align*}
where the last inequality follows from Lemma~\ref{lem:MinorFree}. Thus at least one of these solutions has value at least $(k-6h)/k \cdot \opt(G)$, which is greater than $(1-\eps)\opt(G)$ by our selection of parameter~$k$. 

To analyze the time complexity of our algorithm, observe that computing each solution~$x^{(i)}$ requires $O(n)$ time according to Lemma~\ref{lem:boundedTW} and Lemma~\ref{lem:DisjointUnion}. Thus, the time complexity of the algorithm is dominated by the time required to compute the partition promised by Theorem~\ref{thm:MinorFreePartition}. Demaine \emph{et al.}~\cite{DemaineHK05} showed that this partition can be computed in linear time given the graph decomposition promised by Robertson and Seymour's graph minor theory~\cite{RobertsonSeymour03a}. In turn, Grohe~\emph{et al.}~\cite{GroheKR13} presented an $O(n^2)$-time algorithm for this decomposition, improving earlier constructions~\cite{DemaineHK05,KawarabayashiW11}. Thus, the total running time of our algorithm can also be bounded by $O(n^2)$. This completes the proof of Theorem~\ref{thm:h-minor-free}. 

\section{Conclusion}
We presented efficient combinatorial approximation algorithms for sparse instances of \MaxQP{} without resorting to the semidefinite relaxation, as done by Alon and Naor~\cite{AlonNaor06} and Charikar and Wirth~\cite{Charikarwirth04}. 
From a theoretical perspective, we still leave open whether there is a fast algorithm for~$d$-degenerate \MaxQP{} instances which obtains an $\Omega(1)$ approximation factor guarantee. Even more interesting is to design a purely combinatorial algorithm for general \MaxQP{} instances with an approximation guarantee of $\Omega(1/\lg n)$.
Finally, the simplicity of our algorithms compels the study of their usability in practice, especially for characterizations of ground states of spin glass models.

\bibliographystyle{plain}
\bibliography{bib}

\appendix

\section{An Exact Algorithm for Bounded Treewidth Graphs}
\label{sec:treewidth}

We prove Lemma~\ref{lem:boundedTW} by presenting an algorithm for \MaxQP{} restricted to graphs of treewidth at most~$k$ running in~$2^{O(k)} \cdot n$ time. For this we require the concept of nice tree decompositions~\cite{Kloks94}.

A tree decomposition~$(\mathcal T, \mathcal X)$ is \emph{rooted} if there is a designated bag~$R \in \mathcal X$ being the root of~$\mathcal T$. A rooted tree decomposition is \emph{nice} if each bag~$X \in \mathcal X$ is either (i) a leaf node ($X$ contains exactly one vertex and has no children in~$\mathcal T$), (ii) an introduce node ($X$~has one child~$Y$ in~$\mathcal T$ with~$Y \subset X$ and~$|X \setminus Y| = 1$), (iii) a forget node ($X$~has one child in~$Y$ in~$\mathcal T$ with~$X \subset Y$ and~$|Y \setminus X| = 1$), or (iv) a join node ($X$ has two children~$Y, Z$ in~$\mathcal T$ with~$X = Y = Z$). Given a tree decomposition, one can compute a corresponding nice tree decomposition with the same width in linear time~\cite{Kloks94}.

Our algorithm employs the standard dynamic programming technique on nice tree decompositions.

\begin{proof}[Proof of Lemma~\ref{lem:boundedTW}]
Let~$(\mathcal T, \mathcal X)$ be a nice tree decomposition of~$G$ of width~$k$ with root bag~$R$. For a node~$X \in \mathcal X$ let~$\mathcal T_X$ be the subtree of~$\mathcal T$ rooted at~$X$.
Furthermore, let~$G_X$ be the subgraph of~$G$ induced by the vertices in the bags of~$\mathcal T_X$ (while~$G[X]$ is the subgraph of~$G$ induced only by the vertices in~$X$).
We describe a table in which we have an entry~$D[X, x]$ for each bag~$X \in \mathcal X$ and for each solution~$x: X \to \{-1,1\}$. The entry~$D[X, x]$ contains the value of an optimum solution for~$G_X$, where the values of the vertices in~$X$ are fixed by the solution~$x$.

If~$X$ is a \emph{leaf node}, then~$G_X$ contains no edges and so~$D[X, x] = 0$. If~$X$ is an \emph{introduce node}, then let~$v \in X \setminus Y$ be the introduced vertex, where~$Y$ is the child of~$X$ in $\mathcal T$, and let~$x \setminus x_v$ be the solution~$x$ restricted to the vertices of~$Y$.
Then~$D[X, x]$ additionally contains the value of all edges incident to~$v$, that is,
$$
D[X, x] = D[Y, x \setminus x_v] + \sum_{u \in N(v)} x_u x_v a_{u, v}.
$$
If~$X$ is a \emph{forget node}, then let~$v \in Y \setminus X$ be the forgotten vertex, where~$Y$ is the child of~$X$ in~$\mathcal T$. Then, every value except for~$x_v$ is set in~$x$, so we must choose it so that the value is maximized. 
Then
$$
D[X, x] = \max_{x_v: v \to \{-1, 1\}} D[Y, x \cup x_v].
$$
Finally, if~$X$ is a \emph{join node}, then let~$Y$ and~$Z$ be the children of~$X$ in~$\mathcal T$. Note that~$D[Y, x] + D[Z, x]$ contains the value of~$G[X]$ twice, so
$$
D[X, x] = D[Y, x] + D[Z, x] - \val_x(G[X]).
$$

The tree decomposition contains~$O(n)$ nodes, and for each node there are at most~$O(2^k)$ solutions; thus we need to compute $O(2^k \cdot n)$ entries~$D[\cdot, \cdot]$, each of which can be computed in $O(k)$ time. The optimum value is the maximum over all~$O(2^k)$ solutions for the root bag~$R$. As we can compute a tree decomposition of width~$O(k)$ in~$2^{O(k)} \cdot n$ time~\cite{BodlaenderDDFLP16}, we can compute~$\opt(A)$ in~$O(2^{O(k)} \cdot n + 2^k\cdot k \cdot n) = 2^{O(k)} \cdot n$ time. 
\end{proof}

\section{A Hardness Result}
\label{sec:hardness}
Alon and Naor~\cite{AlonNaor06} show that \MaxQP{} restricted to bipartite graphs is not approximable in polynomial time with a ratio of~$16/17 + \eps$ unless~P$=$NP. Using the same idea, we show that this approximation lower bound also holds for \UnitMaxQP{} on~$2$-degenerate bipartite graphs.

\begin{theorem}
\label{thm:hardness-bipartite}%
If \UnitMaxQP{} restricted to $2$-degenerate bipartite graphs can be approximated within a factor of~$(16/17+\eps)$ in polynomial time, then P$=$NP.
\end{theorem}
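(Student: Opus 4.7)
The plan is to reduce approximating \MaxCut{} to approximating \UnitMaxQP{} on $2$-degenerate bipartite graphs, preserving the approximation ratio. Since Håstad~\cite{Haastad2001} showed that \MaxCut{} on general graphs cannot be approximated within $16/17+\eps$ unless $\mathrm{P}=\mathrm{NP}$, this will yield the theorem. The construction mirrors the one of Alon and Naor~\cite{AlonNaor06} for arbitrary bipartite graphs, but we replace the arbitrary edge weights with a weighted subdivision so that the resulting instance is both unit-weighted and $2$-degenerate.

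Given a \MaxCut{} instance $G = (V,E)$, I would build a bipartite graph $H$ as follows. Introduce one new vertex $z_e$ for each edge $e = \{u,v\} \in E$, and let $V' = \{z_e : e \in E\}$. The vertex set of $H$ is $V \cup V'$ with the bipartition $(V, V')$. For each $e = \{u,v\} \in E$, fix an arbitrary orientation (call $u$ the head and $v$ the tail) and add the two edges $\{u, z_e\}$ and $\{v, z_e\}$ to $H$, with weights $a_{u, z_e} = +1$ and $a_{v, z_e} = -1$. Since every $z_e$ has degree exactly $2$ in $H$, and deleting $V'$ leaves an edgeless graph on $V$, the graph $H$ is $2$-degenerate. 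It is bipartite by construction and has unit weights, so it is a valid \UnitMaxQP{} instance on a $2$-degenerate bipartite graph.

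Next, I would establish the value correspondence. Fix any assignment $x \colon V \to \{-1,1\}$ and consider its optimal extension $\hat{x}$ to $V'$, defined by $\hat{x}_{z_e} := x_u$ for each $e = \{u,v\}$ (with $u$ the head). For such an edge we have $x_u \hat{x}_{z_e} - x_v \hat{x}_{z_e} = 1 - x_u x_v$, which is $2$ if $\{u,v\}$ is cut by $x$ and $0$ otherwise. Hence $\val_{\hat{x}}(H) = 2\cdot\mathrm{cut}_G(x)$, and consequently $\opt(H) = 2\cdot\opt_{\MaxCut}(G)$. Moreover, if $x^*$ is \emph{any} assignment to $V \cup V'$, then we can locally re-optimize each $z_e$ in linear time to obtain an assignment $\tilde{x}$ with $\val_{\tilde{x}}(H) \ge \val_{x^*}(H)$ and $\val_{\tilde{x}}(H) = 2 \cdot \mathrm{cut}_G(\tilde{x}|_V)$.

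Combining these two facts gives the approximation-preserving reduction. If a polynomial-time algorithm returns $x^*$ with $\val_{x^*}(H) \ge \alpha \cdot \opt(H)$, then after the local re-optimization we obtain $\tilde{x}$ with
\[
2 \cdot \mathrm{cut}_G(\tilde{x}|_V) \;=\; \val_{\tilde{x}}(H) \;\ge\; \val_{x^*}(H) \;\ge\; \alpha \cdot \opt(H) \;=\; 2\alpha \cdot \opt_{\MaxCut}(G),
\]
so $\tilde{x}|_V$ is an $\alpha$-approximate cut of $G$. Taking $\alpha = 16/17 + \eps$ and invoking Håstad's hardness of \MaxCut{} completes the proof. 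The main subtlety, and the only place requiring care, is the local re-optimization step: we must ensure that a suboptimal value assigned to subdivision vertices by the approximation algorithm does not contaminate the extracted cut, and this is handled by the explicit $O(m)$ clean-up on $V'$.
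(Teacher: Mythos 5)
Your proposal is correct and is essentially the paper's own proof: the same weighted subdivision gadget (one $+1$ and one $-1$ edge per subdivided edge), the same observation that the result is bipartite and $2$-degenerate, and the same local re-optimization of subdivision vertices to read off a cut of size $\val/2$, followed by H{\aa}stad's \MaxCut{} hardness. No gaps to report.
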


\begin{proof}
We reduce from unweighted \MaxCut{} which does not admit a~$(16/17+\eps)$-approximation unless P$=$NP~\cite{Haastad2001}. Given an undirected unweighted graph~$G = (V, E)$, we construct a graph~$G' = (V \cup V', E')$ by subdividing each edge in~$E$, that is, for every edge~$\{u, w\} \in E$ we add a vertex~$v$ to~$V'$ and the edges~$\{u, v\}, \{v, w\}$ to~$E'$. One edge has weight~$1$ while the other edge has weight~$-1$.	Clearly, $G'$ is bipartite; $V'$ is one bipartition.	As all vertices in~$V'$ have degree two, $G'$ is $2$-degenerate as well.

Let~$x$ be a solution for~$G'$.	Observe that for every vertex~$v \in V'$ we can assume that at least one of its incident edges contributes positively to~$\val_x(G')$; otherwise multiply $x_v$ by $-1$. Furthermore, note that the cut in~$G$ corresponding to~$x$ (restricted to~$V$) is of size~$\val_x(G')/2$: If both edges incident to~$v$ contribute positively to~$\val_x(G')$, then the edge in~$G$ subdivided by~$v$ is cut. Otherwise, the two edges contribute~$0$ to~$\val_x(G')$, and the corresponding edge in~$G$ is not cut.

It follows that if there is a~$(16/17+\eps)$-approximation for \MaxQP{}, then there is one for \MaxCut{} as well, implying P$=$NP by~\cite{Haastad2001}. 
\end{proof}
\end{document}